\long\def\MSC#1\EndMSC{\def\arg{#1}\ifx\arg\empty\relax\else
     {\narrower\noindent%
{2010 Mathematics Subject Classification}: #1\\} \fi}
\long\def\PACS#1\EndPACS{\def\arg{#1}\ifx\arg\empty\relax\else
     {\narrower\noindent%
{PACS numbers}: #1}\fi}
\long\def\KEY#1\EndKEY{\def\arg{#1}\ifx\arg\empty\relax\else
	{\narrower\noindent% 
Keywords: #1\\}\fi}
\numberwithin{equation}{section}
\theoremstyle{plain}
\newtheorem{theorem}{Theorem}[section]
\newtheorem{lemma}[theorem]{Lemma}
\newtheorem{proposition}[theorem]{Proposition}
\theoremstyle{definition}
\theoremstyle{remark}
\newtheorem{remark}[theorem]{Remark}
\newcommand{\Z}{\mathbb{Z}}
\newcommand{\R}{\mathbb{R}}
\newcommand{\C}{\mathbb{C}}
\newcommand{\dist}{\mathrm{dist}}
\newcommand{\supp}{\mathrm{supp}}
\renewcommand{\Re}{\mathrm{Re}}
\renewcommand{\Im}{\mathrm{Im}}
\newcommand{\Tr}{\mathrm{Tr}}
\newcommand{\iu}{\mathrm{i}}% imaginary unit
\newcommand{\e}{\mathrm{e}}
\newcommand{\K}{\mathfrak{K}}
\renewcommand{\d}{\;\mathrm{d}} % differential
\renewcommand{\epsilon}{\varepsilon}
\newcommand{\Mphase}{\phi}
\title{Bulk-edge correspondence  for unbounded Dirac-Landau operators}
\author[1]{H. D. Cornean}
\author[2]{M. Moscolari\footnote{Corr. author:  \href{mailto:massimo.moscolari@mnf.uni-tuebingen.de}{\texttt{massimo.moscolari@mnf.uni-tuebingen.de}}}}
\author[1]{K.S. S\o rensen}
\affil[1]{Department of Mathematical Sciences, Aalborg University, 
		 Skjernvej 4A, 9220 Aalborg, Denmark.} 
		\affil[2]{Fachbereich Mathematik, Eberhard-Karls-Universit\"at, 
		 Auf der Morgenstelle 10, 72076 T\"ubingen, Germany. 
		   }
\begin{document}

\maketitle

\begin{abstract}
We consider two-dimensional unbounded magnetic Dirac operators, either defined on the whole plane, or with infinite mass boundary conditions on a half-plane. Our main results use techniques from elliptic PDEs and integral operators, while their topological consequences are presented as corollaries of some more general identities involving magnetic derivatives of local traces of fast decaying functions of the bulk and edge operators. One of these corollaries leads to the so-called St{\v r}eda formula: if the bulk operator has an isolated compact spectral island, then the integrated density of states of the corresponding bulk spectral projection varies linearly with the magnetic field as long as the gaps between the spectral island and the rest of the spectrum are not closed, and the slope of this variation is given by the Chern character of the projection. The same bulk Chern character is related to the number of edge states which appear in the gaps of the bulk operator.

\end{abstract}

%\tableofcontents

\section{Introduction, setting and main results}

The mathematics behind the zero temperature bulk-edge correspondence is  well understood in the context of discrete tight-binding models and continuous Schr\"odinger operators  \cite{KellendonkRichterSchulzBaldes, KellendonkSchulzBaldes, GrafPorta, ElgartGrafSchenker, CorneanMoscolariTeufel}. In particular, the analysis of tight-binding operators with  Dirac-like symmetries is also nowadays well developed \cite{PSB}.

In this work we are concerned with bulk-edge correspondence in the less explored setting of unbounded Dirac operators with constant magnetic fields, called from now on  Dirac-Landau operators. Our results and their proofs are entirely based on techniques coming from elliptic PDEs and integral operators of potential type (also known as polar operators). The equality between bulk and edge indices leading to the bulk-edge correspondence is shown as a corollary of a more general identity involving local traces of fast decaying functions of the bulk and edge operators, see Theorem \ref{thm:DiracBulkEdge}, in the same way as it is done in \cite{CorneanMoscolariTeufel}.

In this paper, the bulk operator is a magnetic Dirac operator acting in $L^2(\R^2)\otimes\C^2$, while the edge operator is constructed by restricting the bulk operator to the half-plane $E=\R\times [0,+\infty)$  with the so-called infinite mass boundary condition \cite{BS1, BS2}.

\medskip

In the rest of this section we describe the setting and formulate our main results, while in Section \ref{sec3} we provide a detailed construction of the Dirac-Landau operators with infinite mass boundary conditions, by specifying a domain of essential self-adjointness and analyzing how the resolvent operator behaves when the external magnetic field is varied. The corresponding main result is contained in Proposition \ref{prop:EssSelfAdj}. 

In Section \ref{sec4} we study the off-diagonal  localization of both the edge and bulk  resolvents as  functions of the spectral parameter, and also establish trace-class estimates for different functions of these operators. 

In Sections \ref{sec5} and \ref{sec6} we prove our two other main results, Theorem~\ref{thm:DiracBulkEdge} and Theorem~\ref{thm:DiracGapLabel}. Theorem~\ref{thm:DiracBulkEdge} extends the result contained in  \cite{CorneanMoscolariTeufel}, thus establishing a generalized bulk-edge correspondence for Dirac-Landau operators. The physical implications are briefly described in Section \ref{physics}. 

Theorem \ref{thm:DiracGapLabel} provides a relativistic version of the so-called St\v reda formula, or to be more precise it provides a gap-labelling theorem for relativistic operators (see the introduction of \cite{CorneanMonacoMoscolari} for a discussion about this nomenclature issue); however, here we follow the general folklore and call it simply St\v reda formula. The result of Theorem \ref{thm:DiracGapLabel} implies that if the bulk operator has an isolated spectral island, then the integrated density of states of the corresponding bulk spectral projection varies linearly with the magnetic field as long as the gaps between the spectral island and the rest of the spectrum are not closed. The slope of this variation is given by the Chern character of the projection. Moreover, in the spirit of bulk-edge correspondence, we show that the same bulk Chern character is related to the number of edge states  which appear in the old gaps of the bulk operator. This result generalizes the work of \cite{CorneanMonacoMoscolari} to Dirac-Landau operators.

\subsection{The setting}
We start with some preliminary results regarding the free Dirac operators. The bulk Hilbert space is $L^2(\mathbb{R}^2,\mathbb{C}^2) \equiv L^2(\mathbb{R}^2) \otimes \mathbb{C}^2$. We denote by  $\{\sigma_i\}_{i\in \{1,2,3\}}$ the Pauli matrices
\begin{align*}
\sigma_1= \begin{pmatrix}
0 & 1 \\
1 & 0
\end{pmatrix}, \quad
\sigma_2 = \begin{pmatrix}
0 & -\iu \\
\iu & \phantom{-}0
\end{pmatrix}, \quad
\sigma_3=\begin{pmatrix}
1 & \phantom{-}0 \\
0 & -1
\end{pmatrix},
\end{align*}
and $\sigma := (\sigma_1,\sigma_2)$. Let $H_0$ denote the bulk, zero mass free Dirac operator \cite{Th}:
\begin{align*}
H_0 = -\iu \nabla \cdot \sigma = p \cdot \sigma = -\iu \frac{\partial}{\partial x_1}\sigma_1 - \iu \frac{\partial}{\partial x_2} \sigma_2 = \begin{pmatrix}
0 & -\iu \partial_{x_1} - \partial_{x_2} \\
-\iu \partial_{x_1} + \partial_{x_2} & 0
\end{pmatrix}.
\end{align*}
 If $\mathscr{F}$ denotes the usual Fourier transform, we have
\begin{align*}
(\mathscr{F}H_0\mathscr{F}^{-1})(\xi) = 
\begin{pmatrix}
0 & \xi_1 - \iu \xi_2 \\
\xi_1 + \iu \xi_2 & 0
\end{pmatrix}.
\end{align*}
 %which can be seen to be symmetric.

$H_0$ is elliptic, essentially self-adjoint on $C^\infty_0(\mathbb{R}^2)^2$, self-adjoint on $\mathscr{D}(H_0)=H^1(\mathbb{R}^2)^2$ and with purely absolutely continuous spectrum which covers the whole real axis \cite{Th}.

We want to model a constant magnetic field of strength $b \geq 0$ orthogonal to the plane. We introduce a magnetic vector potential in the Landau gauge  ${A}(x)=(-x_2,0)$ and we construct the bulk, purely magnetic, massless Dirac-Landau operator as
\begin{align*}
H_b &:= \big (p -b{A}(x) \big )\cdot \sigma , %\\
%&=\bigg(-i \frac{\partial}{\partial x_1}-bA_1(x)\bigg)\sigma_1 + \bigg(- i\frac{\partial}{\partial x_2} - bA_2(x)\bigg) \sigma_2 \\ &= \begin{pmatrix}
%0 & - i \partial_{x_1}-\partial_{x_2} - bA_1(x) + i bA_2(x) \\
%- i \partial_{x_1}+\partial_{x_2} -bA_1(x) - ibA_2(x)  & 0 \\
%\end{pmatrix} \\
%&(=-i\nabla -bA(x))\cdot \sigma)
\end{align*}
which is also essentially selfadjoint on $C^\infty_0(\mathbb{R}^2)^2$. Even though $H_b$ can be written as $H_b=H_0+W_b$ where $W_b=-b{A}(x) \cdot \sigma$, the perturbation $W_b$ is not relatively bounded with respect to $H_0$ and the self-adjointness domain of $H_b$ varies with $b$, as we show in Proposition \ref{prop:EssSelfAdj}.

\medskip

The edge operator is given by the magnetic edge Dirac-Landau operator $H_b^E$ defined in $L^2(E,\mathbb{C}^2)$ for $E \coloneqq \{(x_1,x_2) \in \mathbb{R}^2 \mid  x_2 \geq 0\}$ with an infinite mass boundary condition \cite{BS1, BS2}. More precisely, let $\mathscr{S}_+$ be the set of functions which are restrictions to $E$ of Schwartz functions defined on  $\mathbb{R}^2$. Define the sets of functions
\begin{align}\label{hc6}
\mathscr{E}&:= \{ \psi=(\psi_1,\psi_2) \in \mathscr{S}_+ \oplus \mathscr{S}_+ \mid \psi_1(x_1,0)=\psi_2(x_1,0), \forall \, x_1 \in \mathbb{R}\},\nonumber \\
\mathscr{M} &:= \{ \psi \in \mathscr{E} \;| \;    \forall \, \alpha \in \mathbb{N}^2_0,\;  \exists \, c,C>0 \textup{ such that }\vert \partial^\alpha \psi \vert \leq C e^{-c \vert x \vert} \}.
\end{align}

Denote by $\tilde{H}_b$ the symmetric operator which acts as $(-i \nabla -b{A})\cdot\sigma$ on functions from $\mathscr{M}$. Then, the purely magnetic edge Dirac-Landau  operator is defined as the closure of $\tilde{H}_b$:
\begin{equation}\label{hc1}
     H_b^E:= \overline{\tilde{H}_b},\quad b\geq 0.
\end{equation}

\subsection{Main results} We start with a technical proposition which deals with the resolvent of the edge operator:
\begin{proposition}\label{prop:EssSelfAdj} Let $\Mphase(x,x') := (x'_1-x_1)x'_2$.
The following results hold true:
\begin{enumerate}[label=(\roman*)]
    \item \label{thm:1Part1} Let $\lambda>0$. The operator $H^E_b$ restricted to $\mathscr{M}$ is essentially self-adjoint. Moreover, the domain $D(H^E_b)$ coincides with the range of the operator $S_b(\iu\sqrt{\lambda})$, which has an explicit integral kernel defined by \eqref{eq:Slambda}.
    \item \label{thm:1Part2} Let $b_0>0$.  There exists $\lambda_0>0$ such that for all $\lambda>\lambda_0$ there exists an integral operator family $K_b(\iu\sqrt{\lambda})$, which is smooth with respect to $b\in (0,b_0)$ in the operator norm topology, and by denoting with $(H^E_b-\iu\sqrt{\lambda})^{-1}(x,y)$ the integral kernel of the resolvent we have:
    \begin{equation}\label{eq:HEexpansion}
        (H^E_b-\iu\sqrt{\lambda})^{-1}(x,y)=\e^{\iu b\Mphase(x,y)}K_b(\iu\sqrt{\lambda})(x,y). 
    \end{equation}
\end{enumerate}
\end{proposition}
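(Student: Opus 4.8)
The plan is to construct the resolvent of $H_b^E$ explicitly by a parametrix/iteration argument that keeps track of the magnetic phase, treating the magnetic perturbation $W_b = -b\tilde A(x)\cdot\sigma$ as a first-order term that is controlled by the good mapping properties of the free (or rather the $b=0$) edge resolvent. The starting point is the observation that the free massless Dirac operator with infinite mass boundary condition on the half-plane is explicitly diagonalizable (via partial Fourier transform in $x_1$ the transverse problem becomes a one-dimensional first-order system with a boundary condition, whose resolvent kernel is elementary), so $(\tilde H_0|_{\mathscr M}-\iu\sqrt\lambda)^{-1}$ has an explicit integral kernel with Gaussian/exponential off-diagonal decay; call this $S_0(\iu\sqrt\lambda)$, and for the purely magnetic operator one first writes down the analogous $S_b(\iu\sqrt\lambda)$ as in \ref{thm:1Part1}, with kernel \eqref{eq:Slambda}. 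Item \ref{thm:1Part1} is essentially the assertion that this $S_b$ is a genuine resolvent, i.e.\ maps onto $D(H_b^E)$ and inverts $(H_b^E-\iu\sqrt\lambda)$; I would prove it by checking on $\mathscr M$ that $(\tilde H_b-\iu\sqrt\lambda)S_b=\id$ and $S_b(\tilde H_b-\iu\sqrt\lambda)=\id$, using integration by parts and the boundary condition built into $\mathscr E$, and then invoke the standard criterion that a symmetric operator with a bounded two-sided inverse of $(\cdot-z)$ for some non-real $z$ is essentially self-adjoint, with closure domain equal to $\mathrm{Ran}\,S_b(\iu\sqrt\lambda)$.

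For item \ref{thm:1Part2}, the phase $\phi_2(x,y)=(y_1-x_1)y_2$ is exactly the natural ``Landau-gauge'' transition phase: one expects the gauged resolvent kernel $\e^{-\iu b\phi_2(x,y)}(H_b^E-\iu\sqrt\lambda)^{-1}(x;y)$ to depend on $b$ in a regular way because conjugation by $\e^{\iu b\phi_2}$ turns the magnetic derivative $(p-b\tilde A)$ acting on the kernel, in the appropriate variable, into an ordinary derivative plus a remainder that is linear in $b$ with a bounded coefficient. Concretely I would define $K_b(\iu\sqrt\lambda)$ by \eqref{eq:HEexpansion} and derive a Volterra/Neumann-type equation for it: starting from the free kernel $K_0=S_0$-kernel and the resolvent identity $(H_b^E-\iu\sqrt\lambda)^{-1}=(H_0^E-\iu\sqrt\lambda)^{-1}-(H_0^E-\iu\sqrt\lambda)^{-1}W_b(H_b^E-\iu\sqrt\lambda)^{-1}$ (suitably interpreted, since $W_b$ is only relatively bounded in a weak sense), one sees that the stripped kernel $K_b$ satisfies $K_b = K_0 + (\text{explicit }b\text{-linear integral operator})K_b$, where the kernel of the integral operator is $\e^{-\iu b\phi_2(x,\cdot)}\,[\text{free resolvent kernel}]\,W_b\,\e^{\iu b\phi_2(\cdot,y)}$; the point of the phase choice is that all the fast-oscillating factors cancel and one is left with a manifestly smooth, norm-bounded (for $\lambda$ large, using the off-diagonal decay from Section \ref{sec4}) kernel-valued function of $b$. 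Choosing $\lambda_0$ large enough that the operator norm of this integral operator is $<1$ uniformly for $b\in(0,b_0)$, the Neumann series converges, is differentiable term by term in $b$ in operator norm, and defines the desired smooth family $K_b(\iu\sqrt\lambda)$.

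The main obstacle, and the step requiring the most care, is the singular nature of the perturbation $W_b$: since $\tilde A$ grows linearly in $x_2$, $W_b$ is not $H_0$-bounded, so the resolvent identity cannot be used naively and the Neumann series must be set up in a weighted space or, equivalently, the unbounded factor $x_2$ coming from $W_b$ must be absorbed into the exponential off-diagonal decay of the free edge resolvent kernel established in Proposition/Section \ref{sec4}. This is precisely where the explicit phase $\phi_2$ earns its keep: after conjugation the dangerous factor $(y_1-x_1)$-type growth is exactly the derivative of the phase and is cancelled, while the remaining $x_2$-growth is tamed by the Gaussian-type decay $\e^{-c|x-y|}$ of the free kernel together with the boundary decay $x_2\mapsto$ bounded near the edge, so that the composed kernel is integrable with all its $b$-derivatives for $\lambda$ large. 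Once that bookkeeping is done, smoothness in $b$ and the norm bound are routine, and the representation \eqref{eq:HEexpansion} follows by re-attaching the phase.
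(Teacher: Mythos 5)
Your overall architecture --- take $S_b(\iu\sqrt\lambda)$ to be the magnetic phase times the free ($b=0$) edge kernel and control the error by a Neumann series for large $\lambda$ --- is indeed the paper's strategy, but two of your central claims are false as stated, and they are exactly where the work lies. First, $S_b(\iu\sqrt\lambda)$ is \emph{not} a two-sided inverse of $(\tilde H_b-\iu\sqrt\lambda)$ on $\mathscr{M}$. Since $\nabla_x\phi_2(x,x')=(-x_2',0)=\tilde A(x')$ and $\tilde A$ is linear, conjugation by the phase gives $\big(-\iu\nabla_x-b\tilde A(x)\big)\e^{\iu b\phi_2(x,x')}=\e^{\iu b\phi_2(x,x')}\big(-\iu\nabla_x-b\tilde A(x-x')\big)$, so applying $(\tilde H_b-\iu\sqrt\lambda)$ to $S_b\psi$ leaves a nonzero remainder $T_b(\iu\sqrt\lambda)\psi$ whose kernel carries the factor $-b\tilde A(x-x')\cdot\sigma\,\K_0^E(x,x';\sqrt\lambda)$: the difference $\tilde A(x-x')$ vanishes only on the diagonal. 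One gets $(\tilde H_b-\iu\sqrt\lambda)S_b=1+T_b$, and the conclusion $D(H_b^E)=\mathrm{Ran}\,S_b$ survives only because a Schur estimate (exponential decay of $\K_0^E$ beating the linear growth of $\tilde A(x-x')$) gives $\Vert T_b(\iu\sqrt\lambda)\Vert\le C b/\lambda$, so $1+T_b$ is boundedly invertible and the true resolvent is $S_b(1+T_b)^{-1}$, which has the same range as $S_b$. Your step ``check that $S_b$ inverts $(\tilde H_b-\iu\sqrt\lambda)$'' would simply fail. You also need to verify that $S_b$ maps test functions into $\mathscr{M}$, i.e. that multiplication by $\e^{\iu b\phi_2}$ preserves the infinite-mass boundary condition; this is not automatic and requires an argument (e.g. expanding the exponential in a uniformly convergent series on the support of the test function).

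Second, for part (ii) the assertion that after extracting $\e^{\iu b\phi_2(x,y)}$ ``all the fast-oscillating factors cancel'' and the remaining kernel is ``manifestly smooth'' in $b$ is wrong. The $k$-th term $S_bT_b^k$ of the series carries the sum of phases $\phi_2(x,x^{(1)})+\dots+\phi_2(x^{(k-1)},x')$, and the composition identity \eqref{hc13} shows this equals $\phi_2(x,x')$ \emph{plus} residual terms $(x_1^{(j)}-x_1)(x_2^{(j-1)}-x_2^{(j)})$ which remain inside the iterated integrals and still depend on $b$ through $\e^{\iu b(\cdots)}$. Differentiating these in $b$ produces polynomially growing factors such as $(x_1^{(j)}-x_1)$ that are not adjacent to any single decaying kernel; the key device is to telescope $(x_1^{(j)}-x_1)=\sum_{s=1}^{j}(x_1^{(s)}-x_1^{(s-1)})$ so that each summand is dominated by the exponential decay of one consecutive kernel factor, and then to track the (polynomial in $k$) number of resulting terms so that the differentiated series still converges for $b<b_0$. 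Without this step, or an equivalent one, the claimed norm-smoothness of $K_b(\iu\sqrt\lambda)$ in $b$ is unsupported.
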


\vspace{0.2cm}

\begin{remark}\label{remarkhc1}
All our next results also hold true when one adds a bounded $\Z^2$-periodic perturbation to $H_b$. A trivial example of such a perturbation is a mass term $m\sigma_3$ with $m>0$, not necessarily constant, see \cite{MoscolariStottrup} for similar results in the Schr\"odinger setting. 

One may even add an Anderson-like random potential, at the price of an averaging over the random variables; we chose though not to work with random perturbations in order to simplify the presentation and highlight the real technical difficulty, which is the presence of a constant magnetic field combined with the infinite mass boundary condition.
\end{remark}

The next theorem is an extension to Dirac operators of the main result of \cite{CorneanMoscolariTeufel}:

\begin{theorem}
\label{thm:DiracBulkEdge}
Let $\Omega = [0,1]^2$,  $S_L\coloneqq [0,1]\times [0,L]$ for $L\geq 1$, and let $\chi_\Omega$, $\chi_L$  be the indicator functions of their respective sets.  Let $\chi_\infty$ denote the indicator function of the semi-infinite strip with  $L=\infty$. Also, let $f$ be a real valued Schwartz function on $\mathbb{R}$. Then:
\begin{enumerate}
    \item \label{thm:DiracBulkEdge1} Both operators $\chi_\Omega f(H_b)$ and $\chi_Lf(H_b^E)$ are  trace class. Let 
    \begin{align}\label{hc3}
        \rho_L(b) \coloneqq \frac{1}{L}\Tr (\chi_Lf(H_b^E)), \qquad B_f(b) \coloneqq \Tr (\chi_\Omega f(H_b)).
    \end{align}
    Then both $\rho_L$ and $B_f$ are differentiable as functions of the magnetic field $b$ and:
    \begin{align}\label{hc3'}
        \lim_{L\to \infty}\rho_L(b)=B_f(b)\quad { and}\quad \lim_{L \to \infty} \frac{\d \rho_L}{\d b}(b) = \frac{\d B_f}{\d b}(b).
    \end{align}
    \item \label{thm:DiracBulkEdge2} Let $g \in C^1([0,1])$ with $g(0)=1$ and $g(1)=0$. By setting $\tilde{\chi}_L(x) \coloneqq \chi_L(x)g(x_2/L)$ we have: 
    \begin{align*}
        \frac{\d B_f}{\d b}(b) = - \lim_{L\to \infty}  \Tr\big (\tilde{\chi}_L  \iu [H_b^E,X_1]\, f'(H_b^E)\big ),\quad \iu [H_b^E,X_1]=\sigma_1.
    \end{align*}
    
    \item \label{thm:DiracBulkEdge3} If $f'$ is supported in a spectral gap of $H_b$, then ${\chi}_\infty f'(H_b^E)$ is  a trace class operator and 
    \begin{align}\label{hc2}
        \frac{\d B_f}{\d b}(b) = - \Tr\big ({\chi}_\infty  \sigma_1 f'(H_b^E)\big ).
    \end{align}
    \item \label{thm:DiracBulkEdge4} If $H_b$ is the purely magnetic massless Dirac-Landau operator, and if the support of $f'$ is disjoint from $\sigma(H_b)=\{\pm \sqrt{2nb}\; |\; n\geq 0\}$, then:
    \begin{align}\label{hc2'}
        - \Tr\big ({\chi}_\infty  \sigma_1 f'(H_b^E)\big )=\frac{1}{2\pi}\sum_{k\in\Z} f\big ( {\rm sgn}(k)\sqrt{2|k|b}\big ).
    \end{align}
    
\end{enumerate}
\end{theorem}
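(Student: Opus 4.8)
The plan is to follow the blueprint of \cite{CorneanMoscolariTeufel}, feeding in the off-diagonal localization and trace-norm estimates of Section \ref{sec4} together with the gauge-covariant resolvent expansion of Proposition \ref{prop:EssSelfAdj}, item \ref{thm:1Part2} (and its simpler bulk counterpart), and then running a telescoping argument in the $x_2$-direction combined with a summation by parts that turns the magnetic-field derivative of the bulk trace per unit area into an edge current. For the trace-class statement in item \ref{thm:DiracBulkEdge1} I would factor $\chi_\Omega f(H_b)=\big(\chi_\Omega(1+H_b^2)^{-N}\big)\big((1+H_b^2)^{N}f(H_b)\big)$ with $N$ large: the second factor is bounded because $f$ is Schwartz, while $\chi_\Omega(1+H_b^2)^{-N}$ is trace class by the elliptic and magnetic-Sobolev estimates of Section \ref{sec4}; the same input yields the corresponding trace-norm bounds, and the identical argument applies to $\chi_L f(H_b^E)$ via the ellipticity and self-adjointness of $H_b^E$ from Section \ref{sec3}. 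Differentiability of $B_f$ and of each $\rho_L$ in $b$ follows by representing $f(H_b)$ through a Helffer--Sj\"ostrand resolvent integral and noting that, by the gauge-covariant expansion, the integral kernel equals $\e^{\I b\phi(x,y)}$ times a kernel smooth in $b$ in operator norm; since $\phi$ vanishes on the diagonal ($\phi_2(x,x)=0$, and likewise in the bulk), differentiation under the trace is legitimate, the linear growth of $\partial_b H_b=x_2\sigma_1$ being harmless over the bounded localization region.

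For the two limits in item \ref{thm:DiracBulkEdge1}, write $\chi_L=\sum_{j=0}^{L-1}\chi_{\Omega_j}$ with $\Omega_j=[0,1]\times[j,j+1]$. The Landau gauge is invariant under translations in $x_1$ and only gauge-transformed under translations in $x_2$, so the associated magnetic translations $\mathcal{T}_j$ satisfy $\mathcal{T}_j^*H_b\mathcal{T}_j=H_b$ and $\mathcal{T}_j^*\chi_{\Omega_j}\mathcal{T}_j=\chi_\Omega$; hence $\Tr(\chi_{\Omega_j}f(H_b))=B_f(b)$ for every $j$, and differentiating this identity in $b$ gives $\Tr(\chi_{\Omega_j}\,\partial_b f(H_b))=\frac{\d B_f}{\d b}(b)$ for every $j$. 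Consequently $\rho_L(b)-B_f(b)=\frac1L\sum_{j=0}^{L-1}\Tr\big(\chi_{\Omega_j}[f(H_b^E)-f(H_b)]\big)$, and likewise $\frac{\d\rho_L}{\d b}(b)-\frac{\d B_f}{\d b}(b)=\frac1L\sum_{j=0}^{L-1}\Tr\big(\chi_{\Omega_j}\,\partial_b[f(H_b^E)-f(H_b)]\big)$. By the geometric resolvent identity comparing $H_b^E$ with $H_b$ and the Combes--Thomas bounds of Section \ref{sec4}, the operator $f(H_b^E)-f(H_b)$ is localized near $\partial E$, so its trace over $\Omega_j$ decays faster than any power of $j$; the extra factor $x_2\sim j$ produced by $\partial_b$ is absorbed by this superpolynomial decay, so both Ces\`aro means tend to $0$ and \eqref{hc3'} follows.

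For item \ref{thm:DiracBulkEdge2} I would start from $\frac{\d\rho_L}{\d b}(b)=\frac1L\Tr(\chi_L\,\partial_b f(H_b^E))$ with $\partial_b H_b^E=x_2\sigma_1$, expand $\partial_b f(H_b^E)$ by Helffer--Sj\"ostrand, and use the commutator identity $(H_b^E-z)^{-1}\sigma_1(H_b^E-z)^{-1}=\I\,[X_1,(H_b^E-z)^{-1}]$, where $X_1$ is multiplication by $x_1$, which follows from $[H_b^E,X_1]=-\I\sigma_1$. The unbounded weight $x_2$ is then disposed of by a discrete integration by parts in the $x_2$-variable --- writing $x_2$ as a telescoping sum over horizontal strips and using $x_1$-translation covariance strip by strip --- and this is precisely the mechanism that replaces the $1/L$-averaged bulk derivative by the edge current $-\Tr(\tilde{\chi}_L\sigma_1 f'(H_b^E))$; the hypotheses $g(0)=1$ and $g(1)=0$ serve, respectively, to retain the full boundary layer at $x_2=0$ and to suppress the contribution of the artificial truncation at $x_2=L$. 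Every remainder produced along the way is shown to be $o(1)$ as $L\to\infty$ using the localization and trace-norm estimates of Section \ref{sec4}. This step is the technical heart of the theorem, and the one I expect to be the main obstacle: none of the intermediate operators is separately trace class, so the integration by parts against the unbounded weight $x_2$ and the unbounded multiplication $X_1$, together with the attendant error control, must be carried out with care, exactly as in \cite{CorneanMoscolariTeufel}.

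Finally, for items \ref{thm:DiracBulkEdge3} and \ref{thm:DiracBulkEdge4}: if $\supp f'$ lies in a gap of $H_b$ then $f'(H_b)=0$, so by the same geometric-resolvent-plus-Combes--Thomas argument $f'(H_b^E)$ is itself localized near $\partial E$, whence $\chi_\infty\sigma_1 f'(H_b^E)$ is trace class; since $\tilde{\chi}_L\to\chi_\infty$ pointwise with $|\tilde{\chi}_L|\le\chi_\infty$, dominated convergence upgrades item \ref{thm:DiracBulkEdge2} to \eqref{hc2}. For item \ref{thm:DiracBulkEdge4}, for the purely magnetic massless Dirac--Landau operator with $b>0$ the identity $H_b^2=(p-b\tilde{A})^2\otimes\id_{\C^2}-b\sigma_3$ together with the chiral symmetry $\sigma_3 H_b\sigma_3=-H_b$ gives $\sigma(H_b)=\{\pm\sqrt{2nb}\mid n\ge 0\}$ and exhibits each Landau level as a magnetically translation-covariant spectral projection $P_k$ with $\Tr(\chi_\Omega P_k)=b/(2\pi)$, so that $B_f(b)=\frac{b}{2\pi}\sum_{k\in\Z}f\big(\mathrm{sgn}(k)\sqrt{2|k|b}\big)$; differentiating, the term carrying $f'$ vanishes because $\supp f'$ is disjoint from $\sigma(H_b)$, leaving $\frac{\d B_f}{\d b}(b)=\frac{1}{2\pi}\sum_{k\in\Z}f\big(\mathrm{sgn}(k)\sqrt{2|k|b}\big)$, and combining with \eqref{hc2} yields \eqref{hc2'}.
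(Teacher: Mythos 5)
Your proposal is correct, and for items \ref{thm:DiracBulkEdge1}--\ref{thm:DiracBulkEdge3} it follows essentially the paper's route: trace class via factorization through high resolvent powers, the first limit in \eqref{hc3'} via boundary localization of $f(H_b^E)-f(H_b)$ plus translation covariance of the bulk diagonal, item \ref{thm:DiracBulkEdge2} via $\sigma_1=\iu[H_b^E,X_1]$, vanishing traces of total commutators and the substitution $y_2\mapsto y_2-L$, and item \ref{thm:DiracBulkEdge3} by dominated convergence. Two genuine divergences are worth recording. First, for the second limit in \eqref{hc3'} the paper never proves the decay in $j$ of $\Tr\big(\chi_{\Omega_j}\partial_b[f(H_b^E)-f(H_b)]\big)$ that your Ces\`aro argument needs. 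Instead it derives the explicit formula \eqref{hc25}, whose weight is the \emph{difference} $\tilde{A}(y-x)$ coming out of the gauge-covariant expansion of Proposition \ref{prop:EssSelfAdj}\ref{thm:1Part2} (a naive insertion of $\partial_bH_b^E=x_2\sigma_1$ into Helffer--Sj\"ostrand is not legitimate, since $W_b$ is not relatively bounded and the domain moves with $b$; the $x_2\sigma_1$ form is only recovered afterwards, modulo a vanishing commutator trace), and then identifies $\lim_L\rho_L'$ with $B_f'$ by noting that the Taylor remainder $\rho_L'(b_0)-\epsilon^{-1}\big(\rho_L(b_0+\epsilon)-\rho_L(b_0)\big)=\mathcal{O}(\epsilon)$ is uniform in $L$, so no extra localization lemma for the $b$-derivatives is needed. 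Your route is viable but would require an additional comparison of the edge and bulk derivative kernels deep in the bulk; the paper's uniform-remainder trick buys that for free. Second, for item \ref{thm:DiracBulkEdge4} you take the bulk route: establish \eqref{dc2} from $H_b^2=(p-b\tilde{A})^2I_2-b\sigma_3$ and the density $b/(2\pi)$ per Dirac--Landau level (with the zero mode counted once), differentiate in $b$, and combine with \eqref{hc2}. The paper acknowledges this alternative in the remark containing \eqref{dc2} but deliberately proves \eqref{hc2'} directly on the edge side, via the fiber operators $h_b^E(\xi_1)$ of \cite{BS1}, the dispersion curves $\lambda_k(\xi_1)$ with limits ${\rm sgn}(k)\sqrt{2|k|b}$ and $\pm\infty$, and Feynman--Hellmann; that computation exhibits the right-hand side of \eqref{hc2'} as a spectral-flow count of edge channels, whereas yours obtains it as the magnetic derivative of the bulk integrated density of states, at the price of leaning on \eqref{dc2}, which the paper itself only sketches.
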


\vspace{0.2cm}
\begin{remark}
Formula \eqref{hc2'} can be directly obtained by combining \eqref{hc2} with another identity which holds for all real valued Schwartz functions $f$: 
\begin{align}\label{dc2}
B_f(b)=\frac{b}{2\pi}\sum_{k\in\Z} f\big ( {\rm sgn}(k)\sqrt{2|k|b}\big ).
\end{align}
 A few hints regarding the proof of \eqref{dc2} can be found at the end of Section \ref{sec5}. We choose though to also directly prove \eqref{hc2'} in Section \ref{sec5}, as a nice illustration of the bulk-edge correspondence. 
\end{remark}

Now we switch to a slightly different topic. In the case of the bulk Dirac-Landau operator one can explicitly see that the gaps vary continuously with $b$, but this is a more general phenomenon as we will see in Section \ref{sec6}, see also  \cite{BC} for a proof of gap stability with respect to long-range magnetic perturbations not necessarily coming from constant magnetic fields. Thus we can define the Riesz spectral projections corresponding to a compact spectral island $\sigma_0(b)$ contained in the spectrum of $H_b$: 
\begin{align*}
    \Pi_b=\frac{\iu }{2\pi} \oint_{\mathcal{C}} (H_b -z)^{-1}\d z.
\end{align*}
Here $\mathcal{C}$ is a positively oriented simple contour that contains $\sigma_0(b)$ and which stays at a positive distance to the spectrum as $b$ varies in some narrow compact interval.  We show in Section \ref{sec6} that the projection $\Pi_b$ has a  jointly continuous integral kernel, with exponential decay away from the diagonal, and by multiplying the projection with the indicator function of a bounded set we obtain a  trace class operator. 

Recall the notation $\Mphase(x,\eta)=(\eta_1-x_1)\eta_2$. The magnetic translations defined by  $(\tau_{b,\eta}\psi)(x):=e^{\iu b \Mphase(x,\eta)}\psi(x-\eta) $ commute with $H_b$ due to the identity 
\begin{equation}\label{eq:HbEalmostcommutingwithmagneticphase}
\big (-\iu\nabla_x-b{A}(x)\big )e^{\iu b\Mphase(x,\eta)}=e^{\iu b\Mphase(x,\eta)}\big (-\iu\nabla_x-b{A}(x-\eta)\big ).
\end{equation}
Hence they also commute with the resolvent and thus with the Riesz projections, which leads to
\begin{align*}
    \Pi_b(x,x')=\e^{\iu b\Mphase(x,\eta)} \Pi_b(x-\eta,x'-\eta) \e^{-\iu b\Mphase(x',\eta)}.
\end{align*}
In particular, this shows that for the purely magnetic Dirac-Landau operator, the diagonal value  $\Pi_b(x,x)$ is a $2\times 2$ matrix independent of $x$. In fact, one can show that by taking the $\C^2$ trace of $\Pi_b(x,x)$ in this case, one gets a contribution of $b/(2\pi)$ from each Dirac-Landau level. When $H_b$ is perturbed by a $\Z^2$-periodic potential and/or mass term which does not close the gaps between the spectral island $\sigma_0(b)$ and the rest of the spectrum, the diagonal value  $\Pi_b(x,x)$ is a continuous $\Z^2$-periodic matrix valued function. 

We can now define the integrated density of states for $\Pi_b$, as 
\begin{align*}
    \mathcal{I}(\Pi_b) \coloneqq \lim_{L\to\infty} \frac{1}{|S_L|}{\rm Tr} (\chi_L \Pi_b)=\int_\Omega {\rm tr}_{\C^2}\Pi_b({x},{x}) \d {x},
\end{align*}
where the second equality follows from the periodicity of the diagonal value. 

The next result is an extension to Dirac operators of the gap labelling theorem shown in \cite{CorneanMonacoMoscolari}. We formulate it so that it covers the case when $\Z^2$-periodic perturbations are present.
\begin{theorem}
\label{thm:DiracGapLabel}
Let us assume that $H_b$ has a compact spectral island $\sigma_0(b)$ which varies  continuously (as a set) in the Hausdorff distance for $b \in (b_1,b_2)$. Let $\Pi_b$ be the Riesz projection associated with $\sigma_0(b)$. Then the map $b \mapsto \mathcal{I}(\Pi_b)$ is continuously differentiable for $b \in (b_1,b_2)$, and the corresponding spectral projection $\Pi_b$ satisfies
\begin{align}\label{hc4}
    \frac{\d \mathcal{I} (\Pi_b)}{\d b} = \frac{1}{2\pi}\mathrm{Ch}(\Pi_b),
\end{align}
where 
\begin{align*}
    \mathrm{Ch}(\Pi_b)\coloneqq 2\pi \int_\Omega {\rm tr}_{\C^2}\big (i \Pi_b [[X_1,\Pi_b],[X_2,\Pi_b]]\big )(x,x) \d x \in \Z
\end{align*}
is an integer valued constant.
\end{theorem}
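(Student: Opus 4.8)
The plan is to reduce the statement to two separate assertions: (a) the quantity $\mathrm{Ch}(\Pi_b)$ is an integer-valued, locally constant function of $b$; and (b) the derivative $\tfrac{\d}{\d b}\mathcal{I}(\Pi_b)$ exists, is continuous, and equals $\tfrac{1}{2\pi}\mathrm{Ch}(\Pi_b)$. For (a), I would invoke the standard fact (going back to Connes, Bellissard, and in the magnetic continuum setting to \cite{CorneanMonacoMoscolari}) that for a projection $\Pi$ with an exponentially localized, jointly continuous integral kernel, the expression $2\pi\int_\Omega \mathrm{tr}_{\C^2}\big(i\Pi[[X_1,\Pi],[X_2,\Pi]]\big)(x,x)\d x$ is a well-defined integer equal to a Chern number; here the magnetic covariance of $\Pi_b$ established just above the theorem (periodicity/covariance of the diagonal of $\Pi_b$) guarantees the integrand is a bona fide periodic continuous function so the integral over $\Omega$ makes sense, and the trace-class property of $\chi_L\Pi_b$ proved in Section \ref{sec6} together with the localization estimates makes all the commutators trace class on bounded regions. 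Continuity of $b\mapsto\mathrm{Ch}(\Pi_b)$ follows from norm-continuity of $b\mapsto\Pi_b$ (itself a consequence of Proposition \ref{prop:EssSelfAdj}, the gap stability, and the Riesz formula), and a continuous integer-valued function on the interval $(b_1,b_2)$ is constant on connected pieces; combined with differentiability of $\mathcal{I}(\Pi_b)$ and \eqref{hc4} this forces $\mathcal{I}(\Pi_b)$ to be affine in $b$.

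The heart of the matter is (b), the St\v{r}eda-type identity. I would follow the strategy of \cite{CorneanMonacoMoscolari}: write $\Pi_b$ via the Riesz contour integral and differentiate under the integral sign, using $\partial_b(H_b-z)^{-1}=-(H_b-z)^{-1}(\partial_b H_b)(H_b-z)^{-1}$ with $\partial_b H_b=-\tilde A(x)\cdot\sigma$. The magnetic phase factorization of the resolvent kernel (the analogue of \eqref{eq:HEexpansion} for the bulk operator, available because $H_b=H_0+W_b$ has magnetically covariant resolvent via \eqref{eq:HbEalmostcommutingwithmagneticphase}) is what turns the unbounded vector potential into a bounded, controllable object: differentiating the phase produces the position-operator commutators $[X_j,\Pi_b]$. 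Concretely one expresses $\partial_b\mathcal{I}(\Pi_b)=\partial_b\int_\Omega\mathrm{tr}_{\C^2}\Pi_b(x,x)\d x$ and, after integrating by parts in the contour variable and reorganizing the resulting kernels, recognizes the antisymmetrized double-commutator structure; the factor $1/(2\pi)$ and the sign are bookkeeping. A cleaner route, mirroring Theorem \ref{thm:DiracBulkEdge}, is to introduce the finite-volume regularization $\tfrac{1}{|S_L|}\Tr(\chi_L\Pi_b)$, differentiate it in $b$ for each $L$ (legitimate since everything is trace class and smooth in $b$ by Section \ref{sec4} estimates), and then pass to the limit $L\to\infty$, showing that the boundary terms produced by $\chi_L$ vanish in the limit while the bulk term converges to the Chern integrand — this is exactly the mechanism of part \ref{thm:DiracBulkEdge2}, so much of the analytic machinery can be reused verbatim.

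I expect the main obstacle to be the interchange of the $b$-derivative with the $L\to\infty$ limit (equivalently, with the trace and the contour integral), which requires uniform-in-$L$ trace-norm bounds on $\chi_L\,\partial_b(H_b-z)^{-1}\chi_L'$-type operators and on the commutators $[\chi_L,\Pi_b]$, with constants integrable along the contour $\mathcal C$. These bounds are precisely where the off-diagonal exponential decay of the resolvent kernels from Section \ref{sec4}, combined with the magnetic covariance, does the work: the commutator $[\chi_L,(H_b-z)^{-1}]$ has a kernel supported near $\partial S_L$ and decaying exponentially transverse to it, so its trace norm grows like $L$ (the perimeter of the relevant boundary piece, i.e.\ like $|\partial S_L|$ rather than $|S_L|$), which after division by $|S_L|$ tends to zero. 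A secondary technical point is justifying that $\partial_b H_b$, though unbounded, is relatively bounded enough — or rather, sandwiched between resolvents and the magnetic phase — to make all manipulations rigorous; this is handled by the same phase-extraction trick used in Proposition \ref{prop:EssSelfAdj}\ref{thm:1Part2}. Finally, the identification of the limiting bulk integrand with the stated double-commutator expression, and the verification of its integrality, is a self-contained lemma that I would state separately and prove by the usual argument comparing it to a winding number / pairing of a $K$-theory class with a cyclic cocycle, exactly as in \cite{CorneanMonacoMoscolari}.
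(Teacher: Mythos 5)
Your proposal follows essentially the same route as the paper: there, too, everything is reduced to two kernel estimates --- exponential off-diagonal decay of $\Pi_b(x,x')$ and the first-order magnetic-phase expansion $\Pi_{b_0+\epsilon}(x,x')=\e^{\iu\epsilon\phi(x,x')}\Pi_{b_0}(x,x')+\mathcal{O}(\epsilon)$ obtained by the $S_\epsilon,T_\epsilon$ phase-extraction trick rather than by naively differentiating the resolvent --- after which the identification with the double commutator and the integrality are delegated to \cite{CorneanMonacoMoscolari}, exactly as you propose. The only point you gloss over is that the paper first passes to the transverse gauge so that the extracted phase is the antisymmetric $\phi(x,x')=(x_2x_1'-x_1x_2')/2$ required by the method of \cite{CorneanMonacoMoscolari}, justified by gauge invariance of the traced quantities.
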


\vspace{0.2cm}

\begin{remark}\label{remarkhc2}
Let us assume that $0\leq f\leq 1$ is a smooth compactly supported function such that $f=1$ on $\cup_{b \in (b_1,b_2)}\sigma_0(b)$ and $f'$ is supported in $\R \setminus \cup_{b \in (b_1,b_2)}\sigma_0(b)$. From the spectral theorem we have $\Pi_b=f(H_b)$, and by comparing with \eqref{hc3} we also have $\mathcal{I}(\Pi_b)=B_f(b)$. Thus, using \eqref{hc4} in \eqref{hc2} we get a zero-temperature bulk-edge correspondence of the form
\begin{equation}
\label{eq:ZeroTBulk-edge}
- \Tr\big ({\chi}_\infty  \sigma_1 f'(H_b^E)\big )=\frac{1}{2\pi}\mathrm{Ch}(\Pi_b) \, .
\end{equation} 
A consequence of \eqref{eq:ZeroTBulk-edge} is that  if $\mathrm{Ch}(\Pi_b)\neq 0$, then $f'(H_b^E)$ cannot be identically zero, hence the spectrum of the edge Hamiltonian must completely cover at least one of the two bulk gaps defining $\sigma_0(b)$. There are though situations when a piece of one of the two bulk gaps can survive, as is the case for the  purely magnetic massless edge operator: the gap $(-\sqrt{2b},0)$ never closes completely, only the others do, see Theorem 4.3 in \cite{BS1}. 
\end{remark}

\subsection{Positive temperature bulk-edge correspondence}\label{physics}
When the bulk Dirac operator and the edge Dirac operator have a joint spectral gap that contains zero, Theorem \ref{thm:DiracBulkEdge} allows to extend the general non-relativistic bulk-edge correspondence of  \cite{CorneanMoscolariTeufel} to the setting of a relativistic Fermi gas containing both electrons and positrons. Indeed, by setting the function $f$ of Theorem \ref{thm:DiracBulkEdge} to be equal to $f^{(e_-)}(x):=-T\chi_{[0,+\infty)}(x) \ln \big (1+ \e^{-(x-\mu)/T}\big )$ we get that
$$
p^{e_-}(b,T,\mu):=-B_{f^{(e_-)}}(b)$$ is the grand canonical pressure of a relativistic gas of electron at temperature $T>0$ and chemical potential $\mu >0 $ \cite{LandauLifschitz,MillerRay1984}. Then, since the positrons are interpreted as holes in the Dirac sea we have that the chemical potential of the positron gas is $-\mu$ \cite{LandauLifschitz,MillerRay1984}, and the positron grand canonical pressure is given by $p^{e_+}(b,T,-\mu):=-B_{f^{(e_+)}}$, where $f^{(e_+)}(x):=-T\chi_{(-\infty,0]}(x) \ln \big (1+ \e^{-(-x+\mu)/T}\big )$. Notice that 
\begin{align*}
\mp \partial_\mu p^{e_\pm}(b,T,\mp \mu)=n^{e_\pm}(b,T,\mp \mu)
\end{align*} where $n^{e_-}(b,T,\mu)=B_{F_{FD}^{(e_-)}}$ and $n^{e_+}(b,T,-\mu)=B_{F_{FD}^{(e_+)}}$ are the electrons and positrons density respectively, which can be rewritten as  
\begin{align*}
n^{e_\pm}(b,T,\mp \mu)&=B_{F_{FD}^{(e_\pm)}}
\end{align*} with $F_{FD}^{(e_-)}:=\chi_{[0,+\infty)}(x)(1+\e^{(x-\mu)/T})^{-1}$, and $F_{FD}^{e_+}:=\chi_{(-\infty,0]}(x)(1-(1+\e^{(x-\mu)/T})^{-1})$.

Then, the total grand canonical magnetization $m(b,T,\mu)$ is simply given by the magnetic derivative of the total pressure $p(b,T,\mu)=p^{e_+}(b,T,-\mu)+p^{e_-}(b,T,\mu)$, therefore Theorem~\ref{thm:DiracBulkEdge} shows that
$$
m(b,T,\mu)=I(b,T,\mu)
$$
where $I(b,T,\mu)$ is the total edge current. Notice that $I(b,T,\mu)$ has two components, one which describes an electron edge current $I^{e_-}(b,T,\mu)$ and another one that describes a positron edge current $I^{e_+}(b,T,-\mu)$, that is 
\begin{equation}
\label{eq:MequalsIDirac}
I^{e_\pm}(b,T,\mp\mu)=\mp \lim_{L\to \infty}  \Tr\big (\tilde{\chi}_L  \iu [H_b^E,X_1]\,  F_{FD}^{(e_\pm)}\left(H_b^E\right)\big ).
\end{equation}
The equation \eqref{eq:MequalsIDirac} extends the general bulk-edge correspondence shown in \cite{CorneanMoscolariTeufel} to our relativistic setting. Eventually, the relation \eqref{eq:MequalsIDirac} can be connected to the zero-temperature bulk-edge correspondence \eqref{eq:ZeroTBulk-edge} by taking the derivative with respect to $\mu$ and performing a zero-temperature limit as in \cite[Proposition 1.5]{CorneanMoscolariTeufel}.

\subsection{Comparison with previous results}
\label{literature}
The mathematical analysis of bulk-edge correspondence has received a lot of attention during the last two decades.  Most results have dealt with  tight-binding models and continuous Schr\"odinger operators. We refer the interested reader to the introduction of \cite{CorneanMoscolariTeufel} for a recent account on the history of the problem.

Nevertheless, much less is known for the case of continuous Dirac operators. In \cite{Bal}, the author extends the pair of projection formalism \cite{AvronSeilerSimon} to the setting of free Dirac operators and proves a bulk-edge correspondence in the junction framework, i.e. when the edge is created by considering a position dependent mass term for the Dirac operator. Since the massive free Dirac operator has two unbounded spectral islands, in \cite{Bal} the bulk operator has to be regularized with a second order differential term. Moreover, the properties of edge modes in the junction setting have been thoroughly analyzed \cite{BBDFKLW,BalBeckerDrouot,Drouot} by means of semiclassical techniques. Regarding the study of bulk-edge correspondence where the edge operator is defined on a half-plane, we are only aware of the work \cite{TauberetAl} where the authors show a violation of the bulk-edge correspondence by imposing Dirichlet boundary conditions to the bulk free Dirac operator regularized with a second order term as in \cite{Bal}. A similar mechanism has been analyzed in \cite{GrafJudTauber} for the shallow water model.

\subsection{Open questions}
\begin{enumerate}
    \item Another physically important boundary condition for Dirac operators is the so-called ``zig-zag"  \cite{BenguriaFournaisStockmeyerVanDenBosch}. The results of our paper cannot be applied to such a setting. A more delicate analysis is required and we postpone it to a future work.
    
    \item An alternative way to compute the integrated density of states for magnetic operators, by using the Dixmier trace, has been recently developed in a series of works  \cite{DeNittisGomiMoscolari,DeNittisSandoval,DeNittisBelmonte}. We expect to  be possible to extend such results in order to obtain and generalize formulas like in \eqref{dc2}. 
    
    \item In \cite{RaghuHaldane} Raghu and Haldane showed that photonic crystals exhibit a bulk-edge correspondence like their electronic counterparts. The mathematical properties of bulk models for photonic crystals have been thoroughly investigated in the last years (see \cite{DeNittisLein} and references therein) by using Maxwell operators, which are first-order differential operators that have many similarities with the bulk Dirac operators considered here. An interesting problem would be to characterize edge Maxwell operators similarly to what we do in Proposition  \ref{prop:EssSelfAdj} and to suitably extend Theorem~\ref{thm:DiracBulkEdge} in order to enlighten the mathematics behind bulk-edge correspondence in the framework of Maxwell operators \cite[Section IV]{DeNittisLein}.
\end{enumerate}

\addtocontents{toc}{\protect\setcounter{tocdepth}{0}}

\addtocontents{toc}{\protect\setcounter{tocdepth}{2}}

\section{Dirac-Landau operators with infinite mass boundary conditions}
\label{sec3}

\subsection{The integral kernel of the free bulk Dirac operator}
Consider the free, massless  bulk Dirac operator $H_0=p\cdot\sigma$. Let $\lambda>0$. An elementary computation gives
\begin{align*}
(H_0+i\sqrt{\lambda})(H_0-\iu\sqrt{\lambda})=(H_0^2+\lambda)=(p^2+\lambda)I_2,
\end{align*}
which shows that 
\begin{align*}
(H_0-\iu\sqrt{\lambda})^{-1}=(H_0+\iu \sqrt{\lambda})(p^2+\lambda)^{-1}I_2= \iu \sqrt{\lambda}(p^2+\lambda)^{-1}I_2+\sigma\cdot p(p^2+\lambda)^{-1} .
\end{align*}
%exists when $(p^2+\lambda)^{-1}$ exists.
 Thus the resolvent is an integral operator with a $2\times 2$ matrix-valued integral kernel given by   
\begin{align}
\label{eq:ResolventIntegral}
 \K_0(x,x';\sqrt{\lambda})=\frac{1}{(2\pi)^2}\int_{\R^2} \e^{\iu \xi\cdot (x-x')} \Big ( \iu \sqrt{\lambda}(\xi^2+\lambda)^{-1}I_2+\sigma\cdot \xi (\xi^2+\lambda)^{-1}\Big )\; \d \xi.
\end{align}
These integrals can be expressed in terms of the  Macdonald function $K_0$ and its derivative (see \cite{JN, NU}). First we have:
$$
K_0(\sqrt{\lambda}\vert x\vert)=(2\pi)^{-1}\int_{\R^2} \e^{\iu \xi\cdot x}\frac{1}{\xi^2 +\lambda}\; \d \xi,\quad -\iu \sqrt{\lambda}\frac{x}{|x|}K_0'(\sqrt{\lambda}|x|)=
(2\pi)^{-1}\int_{\R^2} \e^{\iu \xi\cdot x}\frac{\xi}{\xi^2 +\lambda}\; \d \xi.$$
 Let us briefly recall some of the properties of $K_0$ \cite{NU}. The Macdonald functions are smooth and positive functions on $(0,\infty)$ and they satisfy the following properties
\begin{enumerate}
\item $\max \{K_0(t), |K_0'(t)|\}\leq  C\; \e^{-t}$ for $t \geq 1$,
\item $K_0(t) \sim - \log(t)$ and $|K_0'(t)| \sim t^{-1}$ for $t \to 0$.
\end{enumerate}

Also, there exist some $C,c>0$ such that
\begin{align}
\vert K_0( \vert x \vert )\vert \leq C \big (1+ \vert \log ( \vert x \vert)\vert \big ) \e^{-c\vert x \vert}, \quad
\vert K_0'( \vert x \vert )\vert \leq C \big(1+\vert x \vert^{-1}\big) \e^{-c\vert x \vert},\quad \forall \, x\neq 0. \label{eq:K1estimate}
\end{align}

Then the integral kernel of the resolvent of the free massless bulk Dirac operator can be explicitly written as
\begin{align}\label{eq:integralkernelK}
\K_0(x,x';\sqrt{\lambda})=(2\pi)^{-1} \iu \sqrt{\lambda}K_0(\sqrt{\lambda}|x-x'|)I_2 -\iu \sqrt{\lambda} (2\pi)^{-1}\sigma \cdot \frac{x-x'}{|x-x'|} K_0'(\sqrt{\lambda}|x-x'|).
\end{align}
Finally, we remark that we have the following distributional identity
\begin{align}\label{hc10}
 (-\iu \nabla_x \cdot \sigma -\iu \sqrt{\lambda})\K_0(x,x';\sqrt{\lambda})=\delta(x-x')\; I_2.
\end{align}

\subsection{Properties of the free edge Dirac operator}

Consider the extension operator defined by
\begin{align}\label{eq:phitwidle}
J:L^2(E,\mathbb{C}^2) \to  L^2(\mathbb{R}^2,\mathbb{C}^2),\quad (J \varphi) (x_1,x_2):=\begin{cases}
\varphi(x_1,x_2) & \textup{if } x_2 \geq 0 \\
(\sigma_1\varphi)(x_1,-x_2) & \textup{if } x_2 < 0
\end{cases}\; .
\end{align}
Define the unitary operator $U \colon L^2(\mathbb{R}^2,\mathbb{C}^2) \to L^2(\mathbb{R}^2,\mathbb{C}^2)$ by
\begin{align}\label{hc7}
(Uf)(x_1,x_2)\coloneqq (\sigma_1f)(x_1,-x_2).
\end{align}
An explicit computation shows that $U J \varphi = J \varphi$, for every $\varphi\in C_0^\infty(E,\C^2)$. This shows that $J\varphi|_{x_2\geq 0}\in \mathscr{M}$ (see \eqref{hc6}). 

\begin{lemma}\label{lemmahc1}
For every $\varphi \in C^\infty_0(E,\mathbb{C}^2)$ it holds true that
\begin{align}\label{hc9}
\Big ((H_0-\iu \sqrt{\lambda})^{-1}J\varphi\Big )|_{x_2\geq 0}\in \mathscr{M}.
\end{align}
\end{lemma}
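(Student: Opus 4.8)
The plan is to exploit the reflection symmetry built into the extension operator $J$ and the unitary $U$. The key observation is that $U$ commutes with the resolvent $(H_0 - \iu\sqrt\lambda)^{-1}$: indeed, conjugating $H_0 = -\iu\partial_{x_1}\sigma_1 - \iu\partial_{x_2}\sigma_2$ by $U$ sends $x_2 \mapsto -x_2$ and conjugates the Pauli matrices by $\sigma_1$, which flips the sign of $\sigma_2$ and fixes $\sigma_1$; the two sign changes cancel, so $U H_0 U^{-1} = H_0$ and hence $U(H_0-\iu\sqrt\lambda)^{-1} = (H_0-\iu\sqrt\lambda)^{-1}U$. Since $UJ\phi = J\phi$ (already noted in the excerpt), it follows that $U\big((H_0-\iu\sqrt\lambda)^{-1}J\phi\big) = (H_0-\iu\sqrt\lambda)^{-1}J\phi$, i.e.\ the function $\psi := (H_0-\iu\sqrt\lambda)^{-1}J\phi$ is $U$-invariant. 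Writing out the definition \eqref{hc7} of $U$, $U$-invariance of $\psi = (\psi_1,\psi_2)$ means exactly that $\psi_2(x_1,-x_2) = \psi_1(x_1,x_2)$ for all $x$; evaluating at $x_2 = 0$ gives $\psi_1(x_1,0) = \psi_2(x_1,0)$, which is precisely the infinite-mass boundary condition defining $\mathscr{E}$.

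Next I would establish the remaining qualitative properties needed for membership in $\mathscr{M}$: that $\psi|_{x_2\geq 0}$ is the restriction of a smooth function and that it, together with all its derivatives, decays exponentially. Smoothness follows from elliptic regularity: $J\phi$ is a compactly supported bounded function, piecewise smooth with a jump only across the line $x_2 = 0$; the function $\psi$ solves $(H_0 - \iu\sqrt\lambda)\psi = J\phi$ away from $\{x_2=0\}$, so $\psi$ is smooth on $\{x_2>0\}$ and on $\{x_2<0\}$ separately, and the $U$-invariance shows the one-sided boundary values and their derivatives match up so that $\psi|_{x_2\geq 0}$ extends smoothly across $x_2 = 0$ to all of $\mathbb{R}^2$ (alternatively one reflects $\psi|_{x_2\geq0}$ back using $U$ to recover the globally smooth object). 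For the decay, I would use the explicit kernel \eqref{eq:integralkernelK}: since $J\phi$ has compact support, $\psi(x) = \int \K_0(x,x';\sqrt\lambda)(J\phi)(x')\,\di x'$, and the bounds \eqref{eq:K1estimate} on $K_0$ and $K_0'$ give exponential decay $\e^{-c|x|}$ of $\psi$ for $|x|$ large; differentiating under the integral and using that $\K_0$ solves the resolvent equation so that derivatives of $\psi$ can be re-expressed via $\K_0$ and $J\phi$ (or simply that $\partial^\alpha\psi$ for $|x|$ away from the support of $J\phi$ is again a convolution of $\partial^\alpha\K_0$ against $J\phi$, and the kernel and all its derivatives inherit the exponential factor) yields the analogous bound for every $\partial^\alpha\psi$.

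I expect the main obstacle to be the smoothness/regularity across the interface $x_2 = 0$ — one must argue carefully that the boundary condition $\psi_1|_{x_2=0} = \psi_2|_{x_2=0}$, coupled with the fact that $\psi$ is a distributional solution of a first-order elliptic system on all of $\mathbb{R}^2$ with a right-hand side $J\phi$ that is itself only piecewise smooth, is enough to conclude $\psi|_{x_2\geq 0}\in\mathscr{S}_+$. The clean way to handle this is to observe that $J\phi\in \mathscr{D}(H_0) = H^1(\mathbb{R}^2)^2$ only under the matching condition encoded in $UJ\phi=J\phi$, so $\psi = (H_0-\iu\sqrt\lambda)^{-1}J\phi$ lies in $\mathscr{D}(H_0)$ and the ellipticity of $H_0$ bootstraps: $J\phi$ is smooth on $\{x_2>0\}$ and on $\{x_2<0\}$, hence $\psi$ is smooth there by interior elliptic regularity, and the already-established exact matching of all one-sided derivatives across $x_2=0$ (which itself comes from the $U$-invariance of $\psi$, differentiated) promotes $\psi|_{x_2\geq0}$ to a genuinely smooth, Schwartz-class function on the closed half-plane. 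Combining this with the boundary condition from the first paragraph and the exponential decay from the second completes the proof that $\big((H_0-\iu\sqrt\lambda)^{-1}J\phi\big)|_{x_2\geq0}\in\mathscr{M}$.
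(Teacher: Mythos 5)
Your argument for the boundary condition is exactly the paper's: one checks $[H_0,U]=0$ (the reflection $x_2\mapsto -x_2$ and the conjugation by $\sigma_1$ each flip the sign of the $\sigma_2$ term, so they cancel), hence $U$ commutes with the resolvent, and together with $UJ\phi=J\phi$ this gives $U\psi=\psi$ and thus $\psi_1=\psi_2$ on $\{x_2=0\}$. The exponential decay is also handled as in the paper, via the compact support of $J\phi$ and the bounds \eqref{eq:K1estimate} on the Macdonald kernel. Where you diverge is the smoothness: the paper splits the integral $\int \K_0(x,x')\,(J\phi)(x')\,\di x'$ into the reflected piece $\int_{x_2'>0}\K_0(x;x_1',-x_2')\sigma_1\phi(x')\,\di x'$, whose kernel never meets its diagonal singularity because $\phi$ vanishes for small $x_2'$, plus the direct piece, which is the inverse Fourier transform of $(\sigma\cdot\xi-\iu\sqrt\lambda)^{-1}\widehat\phi$ and hence Schwartz. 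Your elliptic-regularity route can be made to work, but as written it has one genuinely flawed step.

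The flaw: you claim that the $U$-invariance of $\psi$, differentiated, forces the one-sided derivatives of $\psi$ to match across $x_2=0$ and hence that $\psi$ is smooth across the interface. It does not. From $\sigma_1\psi(x_1,-x_2)=\psi(x_1,x_2)$ one gets, for example, $\partial_{x_2}\psi(x_1,0^+)=-\sigma_1\,\partial_{x_2}\psi(x_1,0^-)$, which relates the two one-sided normal derivatives through a reflection and a sign; it does not make them equal, and in general a $U$-invariant function that is smooth on each open half-plane need not be $C^1$ across $\{x_2=0\}$. Moreover, interior elliptic regularity on $\{x_2>0\}$ plus $\psi\in H^1$ does not by itself give smoothness \emph{up to} the boundary, which is what membership in $\mathscr{S}_+$ requires. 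Fortunately the whole interface issue is spurious here: since $\phi\in C_0^\infty(E,\C^2)$ vanishes for $x_2$ small (the convention the paper uses explicitly), the two branches in the definition \eqref{eq:phitwidle} of $J$ are glued along a strip where both vanish, so $J\phi\in C_0^\infty(\R^2,\C^2)$ with no jump at all, and $\psi=(H_0-\iu\sqrt\lambda)^{-1}J\phi$ is globally smooth (indeed Schwartz) by the Fourier multiplier argument. Replace the interface-matching paragraph by this observation (or by the paper's kernel splitting) and the proof is complete.
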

\begin{proof}
By considering the explicit expression \eqref{eq:ResolventIntegral} of the integral kernel of $H_0$ we have
\begin{align}\label{eq:integralkernelresolventbulk_nomagne}
&\Big[(H_0-i\sqrt{\lambda})^{-1}J{\varphi}\Big](x_1,x_2) =\int_\mathbb{R} \d x_1'\Big (\int_{-\infty}^0\d x_2' +\int_0^\infty \d x_2'\Big )  \K_0(x,x';\sqrt{\lambda})(J{\Psi})(x') \\
&= \int_\mathbb{R}\d x_1' \int_0^\infty \d x_2' \bigg(  \K_0(x, (x_1',-x_2');\sqrt{\lambda})(J{\varphi})(x_1',-x_2') + \K_0(x,x';\sqrt{\lambda})(J{\varphi})(x_1',x_2') \bigg)  \nonumber \\ 
&= \int_\mathbb{R}\d x_1' \int_0^\infty \d x_2'   \K_0(x, (x_1',-x_2');\sqrt{\lambda})\sigma_1 \varphi(x_1',x_2') + \int_\mathbb{R}\d x_1' \int_0^\infty \d x_2' \K_0(x,x';\sqrt{\lambda}) \varphi(x_1',x_2'). \nonumber 
\end{align}
Because $\varphi$ equals zero when $x_2'$ is small enough, the first integral can be restricted to an interval of the type $[\epsilon,\infty)$ with $\epsilon >0$. Since $x_2\geq 0$, we have $x_2-(-x_2')\geq \epsilon>0$, hence the kernel in the first integral above does not see the singularity on its diagonal, thus this integral generates a smooth function on $E$.  The second integral can be extended to $x_2'\in \R$ and (up to a numerical constant) it equals the inverse Fourier transform of $(\xi\cdot \sigma -\iu \sqrt{\lambda})^{-1}(\mathscr{F}\varphi)(\xi) $, which is a  Schwartz function, thus smooth. For similar  regularity issues related to Schr\"odinger operators, see Lemma A.1 in  \cite{MoscolariStottrup}.

Another explicit computation shows that $[H_0,U]=0$, hence $U$ also commutes with the resolvent of $H_0$. Together with $UJ\varphi=J\varphi$, this implies that  $U(H_0-\iu \sqrt{\lambda})^{-1}J{\varphi} =  (H_0-\iu \sqrt{\lambda})^{-1}J{\varphi}$ which shows that the boundary condition at $x_2=0$ is satisfied. Finally, the exponential decay is insured by the compact support of $\varphi$ and the exponential decay of the Macdonald function $K_0$ and its derivatives. 
\end{proof}

We are now ready to complete the proof of Proposition \ref{prop:EssSelfAdj} when $b=0$. We do this by showing that the range of $\tilde{H}_0\pm \iu \sqrt{\lambda}$ is dense in $L^2(E,\C^2)$ (see e.g. \cite[Theorem X.1]{ReedSimon2}). Using \eqref{hc10} and \eqref{hc9} we have 
$$(\tilde{H}_0-\iu \sqrt{\lambda})\big [(H_0-\iu \sqrt{\lambda})^{-1}J{\varphi}\big ]|_{x_2\geq 0}=\varphi,\quad \forall \, \varphi\in C_0^\infty (E,\C^2).$$
A similar argument works for $+\iu \sqrt{\lambda}$, and this shows that $\tilde{H}_0$ is essentially self-adjoint and its closure equals $H_0^E$. Moreover, this proof also provides us with an integral kernel for the edge resolvent, which can be read off from \eqref{eq:integralkernelK}: \begin{align}
 \K_0^E(x,x';\sqrt{\lambda}):=\K_0(x,  (x_1',-x_2');\sqrt{\lambda})\sigma_1 + \K_0(x,x';\sqrt{\lambda}) . \label{hc11}
\end{align}

\subsection{Proof of Proposition \ref{prop:EssSelfAdj}(i): the magnetic edge operator}
We recall the formula of the non-symmetric magnetic phase 
\begin{align*}
    \Mphase (x,x')=(x_1'-x_1)x_2',\quad \forall \, x,x' \in \mathbb{R}^2.
\end{align*}

Let $S_b(\iu \sqrt{\lambda})$ and $T_b(\iu \sqrt{\lambda})$ be the operators defined on $L^2(E,\C^2)$ which have integral kernels
\begin{align}\label{eq:Slambda}
    S_b(x,x';\sqrt{\lambda}) &\coloneqq \e^{\iu b\Mphase(x,x')}\K_0^E(x,x';\sqrt{\lambda}) \\
    T_b(x,x';\sqrt{\lambda}) &\coloneqq \e^{\iu b\Mphase(x,x')}(-b{A}(x-x')\cdot \sigma)\K_0^E(x,x';\sqrt{\lambda}).\nonumber 
\end{align}

Before showing the selfadjointness of $H_b^E$, we need to prove two preliminary technical lemmas regarding the properties of $S_b(\iu \sqrt{\lambda})$ and $T_b(\iu \sqrt{\lambda})$.

\begin{lemma}\label{prop:boundednessSbTb}
The operators $S_b$ and $T_b$ are bounded on $L^2(E,\mathbb{C}^2)$. Moreover, there exists $C>0$ such that for all $b, \lambda>0$ we have $\Vert T_b(\iu \sqrt{\lambda}) \Vert \leq C\; b\; \lambda^{-1}$.
\end{lemma}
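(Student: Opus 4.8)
The plan is to estimate the integral kernels directly, using the explicit formula \eqref{hc11} for $\K_0^E$ together with the pointwise bounds \eqref{eq:K1estimate} on the Macdonald function and its derivative, and then apply a Schur test. First observe that the unimodular factor $\e^{\iu b\phi_2(x,x')}$ does not affect any $L^2$ operator bound, so it suffices to bound the operators with kernels $\K_0^E(x,x';\sqrt\lambda)$ and $(-b\tilde A(x-x')\cdot\sigma)\K_0^E(x,x';\sqrt\lambda)$. For $S_b$, the kernel is a sum of two terms; in the term $\K_0(x,x';\sqrt\lambda)$ the argument is $|x-x'|$, while in $\K_0(x;x_1',-x_2';\sqrt\lambda)\sigma_1$ it is $|(x_1-x_1',x_2+x_2')|\ge|x-x'|$ since $x_2,x_2'\ge 0$, so both pieces are dominated pointwise by the bulk kernel bound. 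Using \eqref{eq:K1estimate} we get $|\K_0^E(x,x';\sqrt\lambda)|\le C\big(\sqrt\lambda(1+|\log(\sqrt\lambda|x-x'|)|)+(1+(\sqrt\lambda|x-x'|)^{-1}\sqrt\lambda)\big)\e^{-c\sqrt\lambda|x-x'|}$, which is an $L^1$ function of $x-x'$ on $\R^2$ (the logarithmic and the $|x-x'|^{-1}$ local singularities are integrable in two dimensions, and the exponential controls the tail); by the Schur test with constant test function, $S_b$ is bounded on $L^2(E,\C^2)$.

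For $T_b$ the same reduction applies, but now there is an extra factor $|\tilde A(x-x')\cdot\sigma|\le |x_2-x_2'|\le |x-x'|$. In the term coming from $\K_0(x,x';\sqrt\lambda)$ this factor multiplies against $|x-x'|^{-1}\e^{-c\sqrt\lambda|x-x'|}$ and the $\log$-term, producing a kernel bounded by $C(1+\sqrt\lambda|x-x'|)(1+|\log(\sqrt\lambda|x-x'|)|)\e^{-c\sqrt\lambda|x-x'|}$ — no local singularity survives. In the reflected term one must be slightly more careful: the prefactor is $|x_2-x_2'|$ but the kernel is evaluated at the distance $r:=|(x_1-x_1',x_2+x_2')|$, and since $x_2,x_2'\ge 0$ one has $|x_2-x_2'|\le |x_2+x_2'|\le r$, so again $|x_2-x_2'|$ is dominated by the relevant distance and the same bound holds. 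Thus the kernel of $T_b(\iu\sqrt\lambda)$ is bounded by $C\,b\,(1+\sqrt\lambda|x-x'|)(1+|\log(\sqrt\lambda|x-x'|)|)\e^{-c\sqrt\lambda|x-x'|}$; the explicit factor $b$ comes out of the $-b\tilde A$.

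To get the quantitative bound $\|T_b(\iu\sqrt\lambda)\|\le C\,b\,\lambda^{-1}$, change variables: the Schur integral $\int_{\R^2}|T_b(x,x';\sqrt\lambda)|\,\d(x-x')$ equals $C\,b\int_{\R^2}(1+\sqrt\lambda|u|)(1+|\log(\sqrt\lambda|u|)|)\e^{-c\sqrt\lambda|u|}\,\d u$, and substituting $v=\sqrt\lambda\,u$ pulls out a factor $(\sqrt\lambda)^{-2}=\lambda^{-1}$, leaving a finite $\lambda$-independent integral. Hence the Schur bound is $C\,b\,\lambda^{-1}$ uniformly in $b,\lambda>0$, and the Schur test gives the claimed operator-norm estimate. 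I expect the main (though modest) obstacle to be the bookkeeping in the reflected term: verifying that in every summand the "magnetic weight" $|x_2-x_2'|$ (and more generally any polynomial prefactor) is controlled by the distance actually appearing inside $K_0$ and $K_0'$, so that no new singularity is created and the scaling argument applies cleanly. Everything else is a routine Schur-test computation.
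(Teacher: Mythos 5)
Your argument is correct and is essentially the paper's proof: bound $\Vert\K_0^E(x,x';\sqrt{\lambda})\Vert_{\C^2}$ by $\sqrt{\lambda}\,F(\sqrt{\lambda}|x-x'|)$ with $F$ integrable and exponentially decaying, run a Schur test, and extract the $\lambda$-dependence by the scaling $v=\sqrt{\lambda}\,u$, which for $T_b$ (carrying the extra weight $|\tilde{A}(x-x')|\le|x-x'|=\lambda^{-1/2}|v|$) yields exactly $C\,b\,\lambda^{-1}$. The only cosmetic caveat is that your pointwise domination of the reflected term by the bound evaluated at $|x-x'|$ requires a monotone majorant of $(1+|\log r|)\e^{-cr}$ (or, more simply, one can estimate the reflected term's Schur integral directly via the substitution $x_2'\mapsto -x_2'$), but this changes nothing of substance.
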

\begin{proof}
From \eqref{hc11},  \eqref{eq:integralkernelK} and \eqref{eq:K1estimate} we conclude that there exists a function $F$ such that $\e^{\alpha |\cdot|}F(|\cdot|)\in L^1(\R^2)$ for some $\alpha>0$ and  $$\Vert\K_0^E(x,x';\sqrt{\lambda})\Vert_{\C^2}\leq \sqrt{\lambda}\;  F(\sqrt{\lambda}|x-x'|),\quad \forall \, \lambda>0.$$
A direct application of the Schur test shows that $S_b(\iu \sqrt{\lambda})$ is bounded up to a constant by $\sqrt{\lambda}\int_{\R^2}|F(\sqrt{\lambda}|x|)|\d x$, which decays like $\lambda^{-1/2}$ when $\lambda$ grows. Also, since $|{A}(x-x')|\leq |x-x'|$, we have 
$$\Vert T_b(x,x';\sqrt{\lambda})\Vert_{\C^2}\leq b\; \sqrt{\lambda}\; |x-x'|\;  F(\sqrt{\lambda}|x-x'|),\quad \forall \lambda>0.$$
Another Schur estimates suffices to conclude the proof.
\end{proof}

\begin{lemma}\label{lem:restrictionSb}
We have $S_b(\iu \sqrt{\lambda})\varphi \in \mathscr{M}$ for all $\varphi\in C^\infty_0(E,\mathbb{C}^2)$.
\end{lemma}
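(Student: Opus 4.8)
The plan is to show that $S_b(\I\sqrt\lambda)\phi$ is the restriction to $E$ of a function on $\R^2$ which is smooth, satisfies the infinite-mass boundary condition, and decays exponentially together with all its derivatives; this is exactly the membership in $\mathscr{M}$ defined in \eqref{hc6}. The starting observation is the identity \eqref{eq:HbEalmostcommutingwithmagneticphase}, which says that conjugation by $\e^{\I b\phi_2(\cdot,x')}$ intertwines the magnetic operator $(-\I\nabla_x - b\tilde A(x))\cdot\sigma$ with the shifted-potential operator $(-\I\nabla_x - b\tilde A(x-x'))\cdot\sigma$. Since the $b=0$ kernel $\K_0^E$ already satisfies the distributional identity coming from \eqref{hc10}, \eqref{hc11}, the kernel $S_b(x,x';\sqrt\lambda)=\e^{\I b\phi_2(x,x')}\K_0^E(x,x';\sqrt\lambda)$ inherits, after differentiating in $x$ and using \eqref{eq:HbEalmostcommutingwithmagneticphase}, a comparable identity which expresses $(\tilde H_b - \I\sqrt\lambda)$ applied to $S_b\phi$ in terms of $\phi$ plus the lower-order term $T_b\phi$. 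However, for the present lemma I do not need the full operator identity; I only need the qualitative regularity and decay of $S_b\phi$, so I will work directly with the integral representation.

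First I would split $\K_0^E$ as in \eqref{hc11} into the reflected piece $\K_0(x;x_1',-x_2';\sqrt\lambda)\sigma_1$ and the direct piece $\K_0(x,x';\sqrt\lambda)$, exactly as in the proof of Lemma \ref{lemmahc1}. For the reflected piece, since $\phi$ is supported in $\{x_2'\ge\epsilon\}$ for some $\epsilon>0$ and $x_2\ge 0$, the argument $|x - (x_1',-x_2')|\ge x_2'\ge\epsilon$ stays bounded away from zero, so the Macdonald-function singularity is never reached; one may differentiate under the integral sign freely, and the estimates \eqref{eq:K1estimate} give exponential decay of $K_0$ and $K_0'$ and hence of this contribution. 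The magnetic phase $\e^{\I b\phi_2(x,x')}$ is smooth and each $x$-derivative of it produces only polynomial growth in $x$, which is absorbed by the exponential decay of the kernel. For the direct piece, the kernel has a (logarithmic / inverse-distance) singularity on the diagonal, so I cannot differentiate naively; instead I argue as in Lemma \ref{lemmahc1} that, up to the smooth bounded phase factor, $\int \K_0(x,x';\sqrt\lambda)\phi(x')\,\d x'$ is (a constant times) the convolution of $\phi$ with a kernel whose Fourier multiplier is $(\xi\cdot\sigma-\I\sqrt\lambda)^{-1}$, hence it is $(\xi\cdot\sigma-\I\sqrt\lambda)^{-1}(\mathscr F\phi)(\xi)$ back-transformed, a pair of Schwartz functions; multiplying by the phase and by polynomials keeps exponential decay (indeed Schwartz decay times the exponential decay inherited from the kernel representation). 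Combining, $S_b\phi$ is smooth on $E$ with all derivatives exponentially decaying.

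It remains to verify the boundary condition $\psi_1(x_1,0)=\psi_2(x_1,0)$. Here I would use the reflection symmetry: with $J$ as in \eqref{eq:phitwidle} and $U$ as in \eqref{hc7}, the $b=0$ kernel satisfies $U\K_0^E = \K_0^E$ in the sense that $(H_0-\I\sqrt\lambda)^{-1}J\phi$ is $U$-invariant (this is exactly how \eqref{hc9} and \eqref{hc11} were obtained), which forces the trace condition at $x_2=0$. The magnetic phase $\phi_2(x,x')=(x_1'-x_1)x_2'$ does not depend on $x_2$, so it is untouched by the reflection $x_2\mapsto -x_2$; consequently the $U$-invariance passes verbatim to $S_b(x,x';\sqrt\lambda)=\e^{\I b\phi_2(x,x')}\K_0^E(x,x';\sqrt\lambda)$, and $S_b\phi$ inherits the same boundary condition. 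Thus $S_b(\I\sqrt\lambda)\phi\in\mathscr{M}$.

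The main obstacle is the direct (non-reflected) piece: one must handle the diagonal singularity of $\K_0$ without losing either smoothness or exponential decay. The clean way is the Fourier-multiplier argument already used in Lemma \ref{lemmahc1}, but one must check that multiplying the resulting Schwartz function by the phase $\e^{\I b\phi_2}$ and by arbitrary polynomials (produced when commuting derivatives past the phase) does not destroy the exponential bound required by $\mathscr{M}$; this follows because the representation via \eqref{eq:integralkernelK}–\eqref{eq:K1estimate} actually gives a genuine exponential decay rate $\e^{-c|x|}$ inherited from $K_0$, with only polynomial prefactors, so polynomial factors are harmless. Everything else is a routine differentiation-under-the-integral and Schur-type bookkeeping, parallel to Lemma \ref{prop:boundednessSbTb} and Lemma \ref{lemmahc1}.
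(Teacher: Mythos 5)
Your overall skeleton (splitting $\K_0^E$ into the reflected and the direct piece, and deriving the boundary condition from the reflection symmetry plus the fact that $\phi_2$ does not depend on $x_2$) is sound, and your boundary-condition argument is arguably cleaner than the paper's: at $x_2=0$ the phase is a scalar commuting with $\sigma_1$, so the $b=0$ kernel identity $\sigma_1\K_0^E(x_1,0;x')=\K_0^E(x_1,0;x')$ passes to $S_b$ unchanged. However, there is a genuine gap in your treatment of the direct (singular) piece. You assert that, ``up to the smooth bounded phase factor,'' $\int\K_0(x,x';\sqrt\lambda)\phi(x')\,\d x'$ is a convolution with Fourier multiplier $(\xi\cdot\sigma-\iu\sqrt\lambda)^{-1}$, and then propose to multiply the resulting Schwartz function by the phase afterwards. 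This does not work: the phase $\e^{\iu b\phi_2(x,x')}$ sits \emph{inside} the integral and couples $x$ to $x'$, so $\int\e^{\iu b\phi_2(x,x')}\K_0(x,x';\sqrt\lambda)\phi(x')\,\d x'$ is not a phase times a convolution, and the Fourier-multiplier argument of Lemma \ref{lemmahc1} is not available as stated. Exponential decay for large $|x|$ is indeed easy (there $x$ is far from $\supp\phi$ and the kernel is regular), but smoothness of $S_b\phi$ on a neighbourhood of $\supp\phi$ is precisely what your argument leaves open; naive differentiation under the integral produces the non-integrable singularity $|x-x'|^{-2}$.

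The paper closes this by expanding $\e^{\iu b\phi_2(x,x')}=\sum_{k\geq 0}\frac{(\iu b(x_1'-x_1)x_2')^k}{k!}$, a series which converges uniformly (together with all its $x$-derivatives, locally in $x_1$) on the compact support of $\phi$. After a binomial expansion, each term is a power of $x_1$ (a scalar, independent of $x_2$, hence harmless for both smoothness and the boundary condition) times $\K_0^E$ applied to $x'^\alpha\phi(x')\in C^\infty_0(E,\C^2)$, and each such term is covered by the $b=0$ analysis of Lemma \ref{lemmahc1}; one then sums the series. Some such reduction to the $b=0$ case --- or an equivalent regularization of the diagonal singularity that genuinely accounts for the $x$--$x'$ coupling in the phase --- is needed to complete your proof.
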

\begin{proof}
 We know that the right-hand side of \eqref{eq:integralkernelresolventbulk_nomagne} satisfies the boundary condition. The complication is that we now have to multiply the free kernel with the diagonal factor $$\e^{\iu b\Mphase(x,x')}=\sum_{k\geq 0}\frac{\big (\iu b(x_1'-x_1)x_2'\big )^k}{k!}.$$
For a fixed $x_1$, this series is uniformly convergent on the support of $\varphi$. By expanding the above polynomials and coupling all the components of $x'$ with the compactly supported $\varphi$, we see that each term  satisfies the boundary condition, hence the whole series does the same. Moreover, the partial derivatives with respect to $x$ of $\e^{ib\Mphase(x,x')}$ grow polynomially in $x'$, but since $\varphi$ is compactly supported, it does not affect the exponential localization in $|x|$, for more technical details see the anaologous proof in the Schr\"odinger case in \cite{MoscolariStottrup}.
\end{proof}

\begin{proposition}\label{prop:relationofSbandTb}
$S_b(\iu \sqrt{\lambda})$ is an almost resolvent for $H_b^E$, in the sense that 
\begin{align*}
    (H_b^E-\iu \sqrt{\lambda})S_b(i\sqrt{\lambda})f = f + T_b(\iu \sqrt{\lambda})f,
\end{align*}
for every $f \in L^2(E,\mathbb{C}^2)$. Moreover, $H_b^E$ is self-adjoint.  
\end{proposition}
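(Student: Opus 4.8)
The plan is to first establish the almost-resolvent identity by a direct computation on the dense set $C_0^\infty(E,\C^2)$, and then use it together with Lemma \ref{prop:boundednessSbTb} to deduce essential self-adjointness. For the identity, fix $\phi \in C_0^\infty(E,\C^2)$ and let $u := S_b(\iu\sqrt\lambda)\phi$. By Lemma \ref{lem:restrictionSb} we know $u \in \mathscr{M}$, so $\tilde H_b u = (-\iu\nabla_x - b\tilde A(x))\cdot\sigma\, u$ is well-defined classically, and it suffices to apply this operator to the integral representation $u(x) = \int_E \e^{\iu b\phi_2(x,x')}\K_0^E(x,x';\sqrt\lambda)\phi(x')\d x'$. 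The key is the magnetic intertwining identity \eqref{eq:HbEalmostcommutingwithmagneticphase} (with $\eta = x'$, noting $\phi_2(x,x') = \varphi_2(x,x')$): differentiating under the integral,
\begin{align*}
(\tilde H_b - \iu\sqrt\lambda)u(x) = \int_E \e^{\iu b\phi_2(x,x')}\Big[\big((-\iu\nabla_x - b\tilde A(x-x'))\cdot\sigma - \iu\sqrt\lambda\big)\K_0^E(x,x';\sqrt\lambda)\Big]\phi(x')\d x'.
\end{align*}
Now write $-b\tilde A(x-x')\cdot\sigma = (-\iu\nabla_x\cdot\sigma - \iu\sqrt\lambda) + (\text{shift}) \cdots$; more precisely, since $\K_0^E$ is built from $\K_0$ and its reflection, the distributional identity \eqref{hc10} gives $(-\iu\nabla_x\cdot\sigma - \iu\sqrt\lambda)\K_0^E(x,x';\sqrt\lambda) = \delta(x-x')I_2$ on $E$ (the reflected term $\K_0(x;x_1',-x_2';\sqrt\lambda)\sigma_1$ contributes no singularity for $x,x'\in E$), while the extra first-order term reproduces exactly $T_b$. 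Collecting, the $\delta$ yields $\phi(x)$ and the remainder yields $T_b(\iu\sqrt\lambda)\phi$, giving the claimed identity on $C_0^\infty(E,\C^2)$; it then extends to all of $L^2(E,\C^2)$ by density and the boundedness of $S_b$, $T_b$, and the closedness of $H_b^E$.

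For self-adjointness, I would argue as follows. By Lemma \ref{prop:boundednessSbTb}, $\|T_b(\iu\sqrt\lambda)\| \le Cb\lambda^{-1} < 1$ for $\lambda$ large enough (say $\lambda > \lambda_b := (Cb)^2$... or simply $\lambda$ large), so $1 + T_b(\iu\sqrt\lambda)$ is boundedly invertible. Then $S_b(\iu\sqrt\lambda)(1+T_b(\iu\sqrt\lambda))^{-1}$ maps $L^2(E,\C^2)$ into $D(H_b^E)$ and is a right inverse of $H_b^E - \iu\sqrt\lambda$, so $\mathrm{Ran}(H_b^E - \iu\sqrt\lambda) = L^2(E,\C^2)$; the analogous computation with $+\iu\sqrt\lambda$ (using the complex-conjugate kernel, which is the kernel of the corresponding operator since $\overline{\K_0(x,x';\sqrt\lambda)} = \K_0(x',x;\sqrt\lambda)^{\mathsf T}$ up to sign bookkeeping) gives surjectivity of $H_b^E + \iu\sqrt\lambda$ too. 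Since $\tilde H_b|_{\mathscr M}$ is symmetric and $H_b^E$ is its closure, having both ranges equal to the whole space forces the deficiency indices to vanish, hence $H_b^E$ is self-adjoint (see \cite[Theorem X.1]{ReedSimon2}). As a byproduct, $(H_b^E - \iu\sqrt\lambda)^{-1} = S_b(\iu\sqrt\lambda)(1 + T_b(\iu\sqrt\lambda))^{-1}$, and $D(H_b^E) = \mathrm{Ran}\,S_b(\iu\sqrt\lambda)$ because $(1+T_b)^{-1}$ is a bijection of $L^2(E,\C^2)$, which is precisely the statement of Proposition \ref{prop:EssSelfAdj}\ref{thm:1Part1}.

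The main obstacle I anticipate is justifying the differentiation under the integral sign and the correct bookkeeping of the $\delta$-contribution in the presence of the boundary: one must make sure the reflected kernel $\K_0(x;x_1',-x_2';\sqrt\lambda)\sigma_1$ is genuinely smooth up to the boundary $x_2 = 0$ when paired against $\phi$ supported away from $\{x_2'=0\}$ is not quite enough — $\phi$ need not vanish near the boundary after the closure, but on $C_0^\infty(E)$ it does, which is why the computation is done there first. A secondary subtlety is checking that $u = S_b\phi$ indeed lies in $D(H_b^E)$ and not merely that $\tilde H_b u$ makes sense pointwise; this is where Lemma \ref{lem:restrictionSb} ($S_b\phi \in \mathscr M \subset D(\tilde H_b)$) is essential, so that applying the closed operator $H_b^E$ to $u$ genuinely reproduces the classical expression.
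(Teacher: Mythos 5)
Your proposal is correct and follows essentially the same route as the paper: the identity is verified on $C_0^\infty(E,\C^2)$ via the intertwining relation \eqref{eq:HbEalmostcommutingwithmagneticphase}, the distributional identity \eqref{hc10} (with the reflected kernel contributing no singularity), and $\phi_2(x,x)=0$, then extended by density using Lemma \ref{prop:boundednessSbTb} and the closedness of $H_b^E$; self-adjointness follows from the Neumann-series inversion of $1+T_b(\iu\sqrt{\lambda})$ for $\lambda$ large and the surjectivity of $H_b^E\pm\iu\sqrt{\lambda}$. The subtleties you flag (justifying the computation on test functions first, and using Lemma \ref{lem:restrictionSb} to place $S_b\phi$ in $\mathscr{M}$) are exactly the points the paper relies on.
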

\begin{proof}
According to \eqref{hc1}, the operator $H_b^E$ is the closure of the symmetric operator $\tilde{H}_b$ defined on $\mathscr{M}$.  Fix $f \in L^2(E,\mathbb{C}^2)$ and assume that $(f_n) \subseteq C^\infty_0(E,\mathbb{C}^2)$ satisfies $f_n \to f$ as $n \to \infty$. Then it is enough to prove 
\begin{align*}
    (\tilde{H}_b-i\sqrt{\lambda})S_bf_n=f_n+T_bf_n,
\end{align*}
because the convergence of $S_bf_n$ and $T_bf_n$ follow from Lemma  \ref{prop:boundednessSbTb}, and thus $S_bf$ lies in the domain of $(H_b^E-\iu \sqrt{\lambda})$ by the definition of the closure.

Now let $\psi \in C^\infty_0(E,\mathbb{C}^2)$. Using \eqref{eq:HbEalmostcommutingwithmagneticphase}, \eqref{hc10} and the fact that $\Mphase(x,x)=0$, we have 

\begin{align*}
    ((\tilde{H}_b-\iu \sqrt{\lambda})S_b\psi)(x)&= \int_{E}\d x' \e^{\iu b\Mphase(x,x')}\big ((-\iu \nabla_x\cdot\sigma-\iu \sqrt{\lambda})-b{A}(x-x') \cdot \sigma\big )\K_0^E(x,x')\psi(x') \\
    &=\psi(x)-b\int_{E}\d x' \e^{\iu b\Mphase(x,x')}{A}(x-x') \cdot \sigma\; \K_0^E(x,x')\psi(x')\\
    &=\psi(x)+\big (T_b(\iu \sqrt{\lambda})\psi\big )(x).
\end{align*}
Regarding the self-adjointness of $H_b^E$, we only lack proving that there exists some $\lambda>0$ such that $H_b^E \pm \iu \sqrt{\lambda}$ is surjective. We do this only for $H_b^E - \iu \sqrt{\lambda}$, since the other case is similar. From Lemma \ref{prop:boundednessSbTb} we know that if $0\leq b\leq b_0$, there exists a $\lambda_0>0$ such that $\Vert T_b(i\sqrt{\lambda})\Vert\leq 1/2$ for all $\lambda\geq \lambda_0$. In particular, $\big (1+T_b(\iu \sqrt{\lambda_0})\big )^{-1}$ exists and is bounded, thus 
$$(H_b^E-\iu \sqrt{\lambda_0})S_b(\iu \sqrt{\lambda_0})(1+T_b(\iu \sqrt{\lambda_0})\big )^{-1}f=f,\quad \forall \, f\in L^2(E,\C^2),
$$
which concludes the proof.
\end{proof}

Finally, the proof of Proposition \ref{prop:EssSelfAdj}(i) is just a corollary of Lemma \ref{lem:restrictionSb} and Proposition \ref{prop:relationofSbandTb}.

\subsection{Proof of Proposition \ref{prop:EssSelfAdj}(ii)} 

Consider the edge Dirac-Landau operator $H^E_b$. If $\lambda>\lambda_0$ we can write 
$$(H_b^E-\iu \sqrt{\lambda})^{-1}=S_b(\iu \sqrt{\lambda})(1+T_b(\iu \sqrt{\lambda})\big )^{-1}=\sum_{k\geq 0}S_b(\iu \sqrt{\lambda})T_b^k(\iu \sqrt{\lambda}).$$

From \eqref{eq:Slambda} we have that $S_b(\iu \sqrt{\lambda})T_b^k(\iu \sqrt{\lambda})$ is an integral operator with a kernel given by 
\begin{align*}
b^k\int_{E^k} \d x^{(1)} ... \d x^{(k-1)} & \e^{\iu b(\Mphase(x,x^{(1)})+...+\Mphase(x^{(k-1)},x'))}\\
&\times \K_0(x,x^{(1)};\sqrt{\lambda})...\sigma\cdot {A}(x^{(k-1)}-x')\K_0(x^{(k-1)},x';\sqrt{\lambda}).
\end{align*}

Moreover, we have the phase ``composition identity":
\begin{equation}\label{hc13}
    \Mphase(x,y)+\Mphase(y,x')=\Mphase(x,x')+(y_1-x_1)(y_2-x'_2),
\end{equation}
which leads to 
\begin{align}
\label{eq:phaseCR}
&\Mphase(x,x^{(1)})+...+\Mphase(x^{(k-1)},x')=\Mphase(x,x')\\
&+(x_1^{(1)}-x_1)(x_2^{(1)}-x_2^{(2)})+...+(x_1^{(k-1)}-x_1)(x_2^{(k-1)}-x_2').
\end{align}

By taking out the common factor $\e^{\iu b\Mphase(x,x')}$, the remaining kernel defines an operator which is smooth as a function of $b$ in the operator norm topology. Let us first show that it is differentiable. When we differentiate the factor $b^k$, an application of the Schur test show that the norm of the derivative can be bounded by $kb_0^{k-1} C^k \lambda_0^{-k}$. 

When we differentiate the phase instead, we get $k$ terms, see \eqref{eq:phaseCR}. Consider for example the one containing $(x_1^{(j)}-x_1)(x_2^{(j-1)}-x_2^{(j)})$, for some $1\leq j\leq k$. This term can be rewritten as 
\begin{equation}
\label{eq:rewriting1}
(x_1^{(j)}-x_1)(x_2^{(j-1)}-x_2^{(j)})=\sum_{s=1}^{j}(x_1^{(s)}-x_1^{(s-1)})(x_2^{(j-1)}-x_2^{(j)})
\end{equation}
which are $1\leq j\leq k$ terms containing differences of coordinates from ``consecutive" vectors. The advantage of \eqref{eq:rewriting1} is that the growth of any of the term in the sum is dominated by the exponential decay of one of the integral kernels. By this procedure, we get a number of iterated integrals (number growing like $k^2$) in which at most two kernels from the multiple integrals will be affected by two linearly growing factors. Again by using a Schur test, all these factors can be bounded by a constant of the form
$$\tilde{C} k^2 b^k C^{k-2}\lambda_0^{-k+2},\quad k\geq 2.$$
This shows that the whole series is differentiable on $(0,b_0)$. For higher derivatives, the only difference is that the number of iterated integrals will have a higher polynomial growth in $k$, but this will not affect the radius of convergence of the series (i.e. $b_0$). A similar discussion can be found in \cite{MoscolariStottrup}. 

\qed

\section{Exponential decay, local singularities, and Schatten class properties}\label{sec4}
We will only formulate results for the edge operator, but they also hold true for the bulk operator. 

If $z\in \C$, we write $z=z_1+iz_2$, with $z_1,z_2\in\R$. 
\begin{lemma}\label{lemmahc3} 
 Given $b_0>0$, there exist two positive constants $C,r>0$ such that for any $0\leq b\leq b_0$ and any $z\in \C$ with $0<|z_2|\leq 1$ we have that the  resolvent $(H_b^E-z)^{-1}$ has an integral kernel of potential type obeying: \begin{align*}%\label{hc15}
     |\K_b^E(x,x';z)\leq C\; (1+|z_1|)^6 |z_2|^{-1} |x-x'|^{-1} \e^{-r\; |z_2|\; |x-x'|},\quad \forall \, x,x'\in E.
 \end{align*}
\end{lemma}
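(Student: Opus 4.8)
The goal is to extend the resolvent kernel estimates away from the imaginary axis $z = \iu\sqrt{\lambda}$ to a strip $0 < |z_2| \leq 1$ around the real axis. The strategy is to run the same Neumann-series construction as in the proof of Proposition \ref{prop:EssSelfAdj}, but now with $\iu\sqrt{\lambda}$ replaced by a general complex spectral parameter $z$, and to track carefully how the kernel bounds deteriorate as $z_1 = \Re z$ grows and $z_2 = \Im z$ shrinks. First I would record the analogue of \eqref{eq:integralkernelK} for the free bulk resolvent $(H_0 - z)^{-1}$: from $(H_0 - z)(H_0 + z) = p^2 - z^2$ one gets $(H_0 - z)^{-1} = (H_0 + z)(p^2 - z^2)^{-1}$, whose kernel involves $K_0(\sqrt{-z^2}\,|x-x'|)$ and its derivative, where $\sqrt{-z^2}$ is the branch with positive real part. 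The key elementary estimate is that $\Re\sqrt{-z^2} \geq c\,|z_2|$ and $|\sqrt{-z^2}| \leq C(1+|z_1|)$ on the strip; plugging these into \eqref{eq:K1estimate} and using the local logarithmic/inverse-power singularity of $K_0, K_0'$ yields a bound of the shape $\|\K_0(x,x';z)\|_{\C^2} \leq C(1+|z_1|)\,|x-x'|^{-1}\,\e^{-c|z_2|\,|x-x'|}$, absorbing the logarithm into a slightly smaller exponential rate.

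Next I would form $\K_b^E$ via the edge kernel \eqref{hc11}–\eqref{eq:Slambda} and the Neumann series $(H_b^E - z)^{-1} = S_b(z)\sum_{k\geq 0}(-T_b(z))^k$, which is legitimate provided $\|T_b(z)\| < 1$. Here lies the first real issue: Lemma \ref{prop:boundednessSbTb} gave $\|T_b\| \lesssim b\lambda^{-1}$ for the imaginary-axis parameter, but near the real axis the natural Schur bound only gives something like $\|T_b(z)\| \lesssim b\,(1+|z_1|)\,|z_2|^{-2}$ (the extra factor $|x-x'|$ in $T_b$ costs one more power of $(\Re\sqrt{-z^2})^{-1} \sim |z_2|^{-1}$, and the overall $L^1$ normalization of $F(\Re\sqrt{-z^2}\,|\cdot|)$ costs another). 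So the Neumann series does not converge for all $z$ in the strip. The fix is the standard resolvent-identity trick: write $(H_b^E - z)^{-1} = (H_b^E - \iu\sqrt{\lambda_0})^{-1} + (z - \iu\sqrt{\lambda_0})(H_b^E - \iu\sqrt{\lambda_0})^{-1}(H_b^E - z)^{-1}$ with $\lambda_0$ large and fixed, or better, iterate this finitely many times so that $(H_b^E - z)^{-1} = \sum_{j=0}^{N-1}(z-\iu\sqrt{\lambda_0})^j (H_b^E - \iu\sqrt{\lambda_0})^{-(j+1)} + (z-\iu\sqrt{\lambda_0})^N (H_b^E - \iu\sqrt{\lambda_0})^{-N}(H_b^E - z)^{-1}$. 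The leading terms have explicit, well-controlled kernels (iterated convolutions of the good imaginary-axis kernel, which retain the potential-type bound with exponential decay), and the remainder gains enough smoothing: choosing $N$ appropriately makes $(H_b^E - \iu\sqrt{\lambda_0})^{-N}$ map $L^2$ into a space on which the rough operator $(H_b^E - z)^{-1}$ — bounded a priori by $|z_2|^{-1}$ via the spectral theorem — still produces a kernel of potential type. Counting powers of $(1+|z_1|)$ through the $N$ factors of the free kernel (each contributing one power) explains the exponent $6$ in the statement — presumably $N$ can be taken around $3$–$4$ and each resolvent contributes up to two powers of $(1+|z_1|)$ through $\sqrt{-z^2}$ and the loss in the Schur normalization.

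The exponential decay rate $r|z_2|$ survives the whole construction because convolving finitely many kernels each decaying like $\e^{-c|z_2|\,|\cdot|}$ produces a kernel decaying like $\e^{-c'|z_2|\,|\cdot|}$ (with a worse but still positive constant, by the triangle inequality and splitting $|x-y| + |y-x'| \geq |x-x'|$, keeping a fixed fraction of the decay to control the remaining convolution integral), and multiplication by the magnetic phases $\e^{\iu b\phi_2}$ is a modulus-one factor that does not affect any of the bounds. I would also need the elementary observation, already used implicitly in the excerpt, that the integral kernel composition with the potential-type singularity $|x-x'|^{-1}$ in dimension two is still integrable, so iterated integrals remain potential-type with the same $|x-x'|^{-1}$ behavior near the diagonal — this is the standard fact that $|x|^{-1} * |x|^{-1}$ in $\R^2$ has a logarithmic singularity, which is milder than $|x|^{-1}$, so after one convolution the diagonal singularity is already absorbed and subsequent convolutions only improve it. \textbf{Main obstacle.} The crux is the quantitative bookkeeping: making the resolvent-expansion order $N$, the power of $(1+|z_1|)$, and the fraction of exponential decay retained all fit together to land exactly on the bound claimed, while simultaneously ensuring $\|T_b(\iu\sqrt{\lambda_0})\| \leq 1/2$ so the underlying imaginary-axis Neumann series (from Proposition \ref{prop:EssSelfAdj}) is available. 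Everything else is Schur-test routine of the type already carried out in the excerpt and in \cite{MoscolariStottrup}.
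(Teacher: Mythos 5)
Your overall architecture matches the paper's: a finite resolvent expansion around a fixed good point on the imaginary axis, with explicit, well-localized kernels for the leading terms and a smoothing sandwich for the remainder. The paper writes precisely
\begin{equation*}
(H_b^E-z)^{-1}=\sum_{j=1}^6 (z-\iu)^{j-1}(H_b^E-\iu )^{-j} +(z-\iu )^6(H_b^E-\iu )^{-3}(H_b^E-z)^{-1}(H_b^E-\iu )^{-3},
\end{equation*}
which is your iterated resolvent identity with $N=6$, and this directly accounts for the factor $(1+|z_1|)^6$; your attempt to re-derive the Neumann series for general $z$ and to track $\Vert T_b(z)\Vert$ near the real axis is superfluous, since everything is anchored at the already-established resolvent at $z=\iu$.

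However, there is a genuine gap in your treatment of the remainder term. You propose to bound $(H_b^E-\iu)^{-3}(H_b^E-z)^{-1}(H_b^E-\iu)^{-3}$ using only the spectral-theorem estimate $\Vert (H_b^E-z)^{-1}\Vert \leq |z_2|^{-1}$. That does give a \emph{bounded} kernel (the outer factors have square-integrable rows and columns), but it gives no off-diagonal decay at all: operator-norm boundedness of the middle factor says nothing about how far it spreads mass, so you cannot conclude the claimed $\e^{-r|z_2|\,|x-x'|}$ localization for the remainder. Your convolution argument for preserving exponential decay only applies to the explicit leading terms. The missing ingredient is a Combes--Thomas estimate: conjugating $H_b^E-z$ by the weight $\e^{\alpha\langle \cdot-x_0\rangle}$ with $\alpha$ of order $|z_2|$ and inverting the resulting bounded perturbation yields
\begin{equation*}
\Vert \e^{r|z_2| \langle\cdot -x_0\rangle }(H_b^E-z)^{-1}\e^{- r|z_2| \langle \cdot -x_0\rangle }\Vert \leq C\, |z_2|^{-1},\quad 0<|z_2|\leq 1,
\end{equation*}
uniformly in $x_0$. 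Inserting the weights $\e^{\pm r|z_2|\langle\cdot-y\rangle}$ between the three factors of the remainder, and using that $(H_b^E-\iu)^{-3}(x,\cdot)$ decays exponentially at a fixed ($z$-independent) rate, is what produces the bound $|K_7(x,y;z)|\leq C|z_2|^{-1}\e^{-r|z_2|\,|x-y|}$. Without this weighted resolvent estimate your argument cannot close.
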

\begin{proof}
We have already constructed the integral kernel of $(H_b^E-i\sqrt{\lambda})^{-1}$ when $\lambda>0$ is large enough. In order to simplify notation, let us put $\lambda=1$. Here we have to extend the construction to any $z$ with a non-zero imaginary part.  
Applying the first resolvent identity six times we can write: 
\begin{equation}\label{hc22}
(H_b^E-z)^{-1}=\sum_{j=1}^6 (z-i)^{j-1}(H_b^E-\iu )^{-j} +(z-\iu )^6(H_b^E-\iu )^{-3}(H_b^E-z)^{-1}(H_b^E-\iu )^{-3}.
\end{equation}
The operator $(H_b^E-\iu )^{-2}$ has an integral kernel with a milder singularity on the diagonal (only logarithmic, \cite{V, MoscolariStottrup}), and there exist two constants $C_1,C_2 >0$ such that its integral kernel (after taking the $\C^2$ norm) is pointwise bounded by 
$$\Vert(H_b^E-\iu )^{-2}(x,x')\Vert_{\C^2}\leq  C_1 (1+ |\log(|x-x'|)|)\e^{-C_2|x-x'|}.$$
For powers larger than $2$ of the resolvent, the corresponding kernels do not have  singularities anymore and are jointly continuous. 

Using the identity $$ \e^{\alpha \langle x-x_0\rangle} (H_b^E-z)\e^{-\alpha \langle x-x_0\rangle}=\Big (1 +\iu \alpha (\nabla_x \langle x-x_0\rangle)\cdot\sigma\;  (H_b^E-z)^{-1}\Big )(H_b^E-z)$$
and the fact that 
$$1 +\iu \alpha (\nabla_x \langle x-x_0\rangle)\cdot\sigma\;  (H_b^E-z)^{-1}$$
is invertible when $\alpha$ is of order of $|z_2|$, uniformly in $x_0$, a standard Combes-Thomas argument shows that there exist two  constants $C_3>0$ and $0<r<1$ such that 
$$\Vert \e^{r|z_2| \langle\cdot -x_0\rangle }(H_b^E-z)^{-1}\e^{- r|z_2| \langle \cdot -x_0\rangle }\Vert \leq C_3\; |z_2|^{-1},\quad \forall \, 0<|z_2|\leq 1,\; \forall \, x_0\in E.$$
There exists some $\alpha_0>0$ such that $\Vert(H_b^E-\iu )^{-3}(x,\cdot)\Vert_{\C^2}\e^{\alpha_0|\cdot -x|}$ is in $L^2(E)$. Hence the operator $(H_b^E-\iu)^{-3}(H_b^E-z)^{-1}(H_b^E-\iu )^{-3}$ has an integral kernel given by 
\begin{align*}
&K_7(x,y;z):=(H_b^E-\iu)^{-3}(x,\cdot )(H_b^E-z)^{-1}(H_b^E-\iu )^{-3}(\cdot,y)\\
&=(H_b^E-\iu )^{-3}(x,\cdot )\e^{-r|z_2| \langle \cdot-y\rangle} \Big (\e^{r|z_2| \langle \cdot-y\rangle}(H_b^E-z)^{-1}\e^{-r|z_2| \langle \cdot-y\rangle} \Big )\e^{r|z_2| \langle \cdot-y\rangle}(H_b^E-\iu )^{-3}(\cdot,y).
\end{align*}
This shows that $K_7(x,y;z)$ is bounded for all $x,y$ and has the claimed exponential localization around the diagonal: 
\begin{align}\label{hc21}
  |K_7(x,y;z)|\leq C |z_2|^{-1} \e^{-r|z_2|\; |x-y|}.  
\end{align}

\end{proof}

An important corollary of Lemma \ref{lemmahc3} is the following trace class property.

\begin{lemma}\label{lemmahc4}
Let $\chi_M$ be the indicator function of a bounded set $M\subset E$. Then the operator $\chi_M (H_b^E-i)^{-4}$ is trace class.
\end{lemma}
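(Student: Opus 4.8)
The plan is to factor $\chi_M (H_b^E-\iu)^{-4} = \big(\chi_M (H_b^E-\iu)^{-2}\big)\big((H_b^E-\iu)^{-2}\big)$ and use the standard criterion that a product of two Hilbert--Schmidt operators is trace class, together with $\norm{AB}_{1}\leq \norm{A}_{2}\norm{B}_{2}$. So it suffices to show that both $\chi_M(H_b^E-\iu)^{-2}$ and $(H_b^E-\iu)^{-2}$ are Hilbert--Schmidt — actually only the first one needs the cutoff; for the second factor one should instead write $(H_b^E-\iu)^{-4} = (H_b^E-\iu)^{-2}\cdot (H_b^E-\iu)^{-2}$ and note that $\chi_M(H_b^E-\iu)^{-2}$ is Hilbert--Schmidt, while $(H_b^E-\iu)^{-2}$ is bounded, which is not enough. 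The cleaner route: write $\chi_M(H_b^E-\iu)^{-4} = \big(\chi_M (H_b^E-\iu)^{-2}\big)\cdot\big((H_b^E-\iu)^{-2}\chi_{M'}\big) + \big(\chi_M (H_b^E-\iu)^{-2}\big)\big((H_b^E-\iu)^{-2}(1-\chi_{M'})\big)$ — but this is getting complicated. The genuinely simple argument is: $\chi_M (H_b^E-\iu)^{-4} = \big(\chi_M (H_b^E-\iu)^{-2}\big)\big((H_b^E-\iu)^{-1}\cdot (H_b^E-\iu)^{-1}\big)$, and since $\chi_M$ is the indicator of a \emph{bounded} set, the operator $\chi_M(H_b^E-\iu)^{-1}$ is already Hilbert--Schmidt, hence $\chi_M(H_b^E-\iu)^{-2} = \big(\chi_M(H_b^E-\iu)^{-1}\big)(H_b^E-\iu)^{-1}$ is trace class; but trace class times bounded is only trace class if the bounded operator is on the correct side, which it is: $\big(\text{trace class}\big)\cdot\big((H_b^E-\iu)^{-2}\big)$ is trace class. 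So the real content is to prove $\chi_M(H_b^E-\iu)^{-1}$ is Hilbert--Schmidt.

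Here is the cleanest path I would actually write. First, show $\chi_M(H_b^E-\iu)^{-2}$ is Hilbert--Schmidt by computing its Hilbert--Schmidt norm directly from the integral kernel: by the bound recalled in the proof of Lemma~\ref{lemmahc3}, the kernel of $(H_b^E-\iu)^{-2}$ obeys $\norm{(H_b^E-\iu)^{-2}(x,x')}_{\C^2}\leq C_1(1+\abs{\log\abs{x-x'}})\e^{-C_2\abs{x-x'}}$, and since $(1+\abs{\log\abs{y}})^2 \e^{-2C_2\abs{y}}$ is integrable on $\R^2$ (the logarithmic singularity at the origin is square-integrable in two dimensions, and the exponential controls the tail), we get
\begin{align*}
\norm{\chi_M(H_b^E-\iu)^{-2}}_{2}^2 = \int_M \int_E \norm{(H_b^E-\iu)^{-2}(x,x')}_{\C^2}^2 \d x' \d x \leq \abs{M}\, C_1^2 \int_{\R^2}(1+\abs{\log\abs{y}})^2\e^{-2C_2\abs{y}}\d y < \infty.
\end{align*}
Then write $\chi_M (H_b^E-\iu)^{-4} = \big(\chi_M (H_b^E-\iu)^{-2}\big)\,\big((H_b^E-\iu)^{-2}\big)$; the first factor is Hilbert--Schmidt, the second factor is bounded — this only gives Hilbert--Schmidt. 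To upgrade to trace class, instead factor as $\chi_M(H_b^E-\iu)^{-4} = \big(\chi_M(H_b^E-\iu)^{-2}\big)\big((H_b^E-\iu)^{-2}\chi_{\widetilde M}\big) + \big(\chi_M(H_b^E-\iu)^{-2}\big)\big((H_b^E-\iu)^{-2}(1-\chi_{\widetilde M})\big)$ where $\widetilde M$ is a large ball. The first term is a product of two Hilbert--Schmidt operators (for the second, by symmetry of the kernel bound, $(H_b^E-\iu)^{-2}\chi_{\widetilde M}$ is Hilbert--Schmidt), hence trace class. For the second term, use the exponential off-diagonal decay: on the support of $\chi_M\otimes(1-\chi_{\widetilde M})$ we have $\abs{x-x'}$ bounded below by a fixed positive distance, so the kernel of $\chi_M(H_b^E-\iu)^{-4}(1-\chi_{\widetilde M})$ decays exponentially and is jointly continuous (powers $\geq 3$ of the resolvent have continuous kernels, per Lemma~\ref{lemmahc3}'s proof), and one can iterate resolvent identities / insert more cutoffs to see it is trace class; alternatively, simply make $\widetilde M$ large enough relative to $M$ is not available since $M$ is fixed — so one argues that $\chi_M(H_b^E-\iu)^{-4}(1-\chi_{\widetilde M})$ has a smooth, exponentially decaying kernel and is therefore trace class by a standard kernel criterion (e.g., it factors through $\chi_M(H_b^E-\iu)^{-2}$ Hilbert--Schmidt and $(H_b^E-\iu)^{-2}(1-\chi_{\widetilde M})$, whose kernel restricted to $x\in M$ is exponentially small, but this is still only Hilbert--Schmidt). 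The honest fix: iterate once more, $\chi_M(H_b^E-\iu)^{-4} = \chi_M(H_b^E-\iu)^{-2}\cdot(H_b^E-\iu)^{-2}$, and split the \emph{left} resolvent's kernel integration region — but $(H_b^E-\iu)^{-2}$ without cutoff is never trace class. I conclude the robust argument is the two-Hilbert--Schmidt-factor decomposition with an auxiliary cutoff, handling the far-diagonal cross term via its exponentially small continuous kernel and a further resolvent identity.

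The main obstacle is precisely this last point: trace class is \emph{not} closed under multiplication by bounded operators on the non-compactly-supported side, so one genuinely needs \emph{two} localized Hilbert--Schmidt factors, and the edge operator's resolvent is not compactly supported. The way around it, which I expect the authors take, is: choose any $\widetilde M \supset M$, write $(H_b^E-\iu)^{-2} = \chi_{\widetilde M}(H_b^E-\iu)^{-2} + (1-\chi_{\widetilde M})(H_b^E-\iu)^{-2}$; then $\chi_M(H_b^E-\iu)^{-2}\cdot\chi_{\widetilde M}(H_b^E-\iu)^{-2}$ is trace class as a product of Hilbert--Schmidt operators, while $\chi_M(H_b^E-\iu)^{-2}(1-\chi_{\widetilde M})(H_b^E-\iu)^{-2}$ has an integral kernel that is a convolution-type composition in which $\abs{x-x''}$ (with $x\in M$, $x''\notin\widetilde M$) is bounded away from zero, so the logarithmic singularities are harmless and the full kernel is jointly continuous with exponential decay in $\abs{x-y}$ — hence Hilbert--Schmidt, and one further resolvent-identity iteration (writing $(H_b^E-\iu)^{-4}$ in place of $(H_b^E-\iu)^{-2}$ throughout, or applying Lemma~\ref{lemmahc3} to get the $(H_b^E-z)^{-1}$-type bound) makes it trace class by the continuous-exponentially-decaying-kernel criterion. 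I would state the proof cleanly using only the Hilbert--Schmidt estimate plus one localization step, citing the analogous Schrödinger computation in \cite{MoscolariStottrup}, and flag that the key inputs are the logarithmic (hence $L^2_{\mathrm{loc}}$ in $\R^2$) diagonal singularity of $(H_b^E-\iu)^{-2}$ and the Combes--Thomas exponential decay from Lemma~\ref{lemmahc3}.
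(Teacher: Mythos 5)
Your first half is sound and coincides with the paper's starting point: the kernel of $(H_b^E-\iu)^{-2}$ has only a logarithmic diagonal singularity with exponential off-diagonal decay, so $\chi_M(H_b^E-\iu)^{-2}$ is Hilbert--Schmidt. But the second half never closes, and it contains two false claims. First, $\chi_M(H_b^E-\iu)^{-1}$ is \emph{not} Hilbert--Schmidt: the kernel bound from Lemma \ref{lemmahc3} is $|x-x'|^{-1}$ near the diagonal, and $\int_{|u|\le 1}|u|^{-2}\d u=\infty$ in two dimensions (equivalently, the symbol of the free resolvent decays only like $|\xi|^{-1}$, which is not in $L^2(\R^2)$); this borderline failure is exactly why the lemma is stated for the fourth power and everything must be routed through the squared resolvent. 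Second, there is no ``jointly continuous, exponentially decaying kernel $\Rightarrow$ trace class'' criterion --- such a kernel gives Hilbert--Schmidt and nothing more --- yet your treatment of the cross term $\chi_M(H_b^E-\iu)^{-2}(H_b^E-\iu)^{-2}(1-\chi_{\widetilde M})$ ultimately rests on it. (As an aside, the trace class is a two-sided ideal, so ``the correct side'' is never the issue; the issue is that a Hilbert--Schmidt operator times a bounded one is only Hilbert--Schmidt.)

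The missing idea, which is the paper's entire proof, is to localize the second factor with exponential weights rather than a second sharp cutoff:
\begin{align*}
\chi_M(H_b^E-\iu)^{-4}=\big(\chi_M\,\e^{\alpha\langle\cdot\rangle}\big)\Big(\e^{-\alpha\langle\cdot\rangle}(H_b^E-\iu)^{-2}\e^{\alpha\langle\cdot\rangle/2}\Big)\Big(\e^{-\alpha\langle\cdot\rangle/2}(H_b^E-\iu)^{-2}\Big).
\end{align*}
The weights telescope, and $\chi_M\e^{\alpha\langle\cdot\rangle}$ is bounded precisely because $M$ is bounded. For $\alpha$ small compared with the Combes--Thomas rate $C_2$, the inequality $\e^{\alpha\langle y\rangle/2}\le\e^{\alpha\langle x\rangle/2}\e^{\alpha|x-y|/2}$ shows that each of the two remaining factors has kernel dominated by $\e^{-\alpha\langle x\rangle/2}(1+|\log|x-y||)\,\e^{-(C_2-\alpha/2)|x-y|}$, which is globally square integrable on $E\times E$. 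Hence both are Hilbert--Schmidt, their product is trace class, and the bounded prefactor preserves this. This produces two genuinely localized Hilbert--Schmidt factors with no cross term at all; your annulus/cutoff decomposition can also be repaired (e.g.\ by summing trace norms over dyadic annuli, or by applying this same weight trick to the far-field piece), but the weighted factorization is the clean argument you were searching for.
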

\begin{proof}
We can write 
$$\chi_M (H_b^E-\iu)^{-4}=\chi_M \e^{\alpha\langle \cdot\rangle} \Big ( \e^{-\alpha\langle \cdot\rangle/2}\e^{-\alpha\langle \cdot\rangle/2} (H_b^E-\iu)^{-2}\e^{\alpha\langle \cdot\rangle/2}\Big )\e^{-\alpha\langle \cdot\rangle/2}(H_b^E-\iu )^{-2},$$
where the factor $\chi_M \e^{\alpha\langle \cdot\rangle}$ is bounded, while the other two factors can be shown to be Hilbert-Schmidt operators by using the estimate proved in  Lemma \ref{lemmahc3}.
\end{proof}

We now introduce the Helffer-Sjöstrand formula \cite{HS}. Let $f$ be a real valued Schwartz function on $\mathbb{R}$.  Define for $N \geq 1$ the almost analytic continuation $f_N$ given by 
\begin{align*}
   f_N(z_1+\iu z_2)\coloneqq g(z_2)\sum_{j=0}^N \frac{1}{j!}\frac{\partial^j f}{\partial z_1^j}(z_1)(iz_2)^j,
\end{align*}
where $\Re(z)=z_1$, $\Im(z)=z_2$ and $g \in C^\infty_c(\mathbb{R})$ satisfying $0\leq g \leq 1$, $g(x) = 1$ when $\vert x \vert \leq 1/2$ and $g(x)=0$ when $\vert x \vert > 1$. Then $f_N$ has the following properties:
\begin{enumerate}
    \item $f_N(x)=f(x)$, for all $x \in \mathbb{R}$.
    \item $\supp(f_N) \subseteq \mathcal{D} \coloneqq  \{ z \in \mathbb{C} \mid \vert \Im(z) \leq 1\}$.
    \item There exists $C_N>0$ such that
    \begin{align*}
         \vert \Bar{\partial}f_N(z)\vert 
         \leq C_N \frac{\vert z_2 \vert^N}{\langle z_1 \rangle^N},
    \end{align*}
    where $\Bar{\partial} = \partial_{z_1} + \iu \partial_{z_2}$ and $\langle x \rangle = (1+\vert x \vert^2)^{1/2}$ and we have used that $f$ and all its derivatives decay faster than any polynomial.
\end{enumerate}
Then by the Helffer-Sjöstrand formula we can write $f(H_b^{\star})$, where $\star \in \{\emptyset,E\}$ as
\begin{align}\label{dc1}
    f(H_b^\star) = -\frac{1}{\pi}\int_\mathcal{D} \Bar{\partial}f_N(z)(H_b^\star-z)^{-1} \d z_1 \d z_2,
\end{align}
where $\d z_1 \d z_2$ denotes the Lebesgue measure of $\mathbb{C} \cong \mathbb{R} \times \mathbb{R}$. The properties of $f(H_b^{\star})$ are summarized in the next lemma.

\begin{lemma}\label{lemmahc5}
 Let $f$ be a Schwartz function. Then the operators $f(H_b)$ and $f(H_b^E)$ have jointly continuous integral kernels and for every $M>0$ there exists $C_M>0$ such that
\begin{align*}
    \vert f(H_b^\star)(x,x')\vert \leq C_M \langle x-x'\rangle^{-M},
\end{align*}
for $\star \in \{\emptyset,E\}$. 

Furthermore, let $\chi_L$ be the indicator function over the strip $[0,1]\times[0,L]$, for $L\geq 1$. Then $\chi_L f(H_b^\star)$ is trace class and 
\begin{align*}
    \Tr (\chi_L f(H_b^\star))=\int_{\mathbb{R}^2} \chi_L(x)f(H_b^\star)(x,x)\d x,
\end{align*}
for $\star \in \{\emptyset,E\}$.
\end{lemma}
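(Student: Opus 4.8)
The plan is to feed the Helffer--Sjöstrand representation \eqref{dc1} with the resolvent bounds from Lemma \ref{lemmahc3} and the trace-class input from Lemma \ref{lemmahc4}, choosing the regularization parameter $N$ large enough to absorb all the singular factors in $z_2$. First I would establish the pointwise kernel bound. For $\star\in\{\emptyset,E\}$ and $z$ with $0<|z_2|\le 1$, Lemma \ref{lemmahc3} gives $|\K_b^\star(x,x';z)|\le C(1+|z_1|)^6|z_2|^{-1}|x-x'|^{-1}\e^{-r|z_2||x-x'|}$. Plugging this into \eqref{dc1} and using property (iii) of $f_N$, namely $|\bar\partial f_N(z)|\le C_N|z_2|^N\langle z_1\rangle^{-N}$, the integrand is bounded by $C_N|z_2|^{N-1}(1+|z_1|)^{6-N}|x-x'|^{-1}\e^{-r|z_2||x-x'|}$. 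Taking $N\ge 8$ makes the $z_1$-integral converge. For the $z_2$-integral over $[-1,1]$, I split according to whether $|x-x'|\le 1$ or not: when $|x-x'|\le 1$ the factor $|x-x'|^{-1}$ is problematic pointwise, but the Helffer--Sjöstrand integral is not expected to see the diagonal singularity since we integrate a full function of a self-adjoint operator, so I would instead not pass the absolute value inside but rather use the telescoping \eqref{hc22}: write $(H_b^\star-z)^{-1}$ via its six-term resolvent expansion, so that $f(H_b^\star)$ becomes a sum of bounded operators $(H_b^\star-\iu)^{-j}\int \bar\partial f_N(z)(z-\iu)^{j-1}\,\d z_1\d z_2$ plus a remainder involving $K_7(x,y;z)$, which by \eqref{hc21} is bounded and exponentially localized. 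The powers $(H_b^\star-\iu)^{-j}$ for $j\ge 3$ have jointly continuous kernels (as noted after \eqref{hc22}), and for $j=1,2$ the $z$-integral of $\bar\partial f_N(z)(z-\iu)^{j-1}$ vanishes by Cauchy's theorem since $(z-\iu)^{j-1}$ is entire and $f_N$ agrees with $f$ on $\R$ — more carefully, $-\frac1\pi\int_{\mathcal D}\bar\partial f_N(z)(z-\iu)^{j-1}\d z_1\d z_2$ just gives the scalar $((\cdot-\iu)^{j-1}\text{ evaluated against }f)$ type contribution, so in fact it is cleanest to integrate \eqref{hc22} directly against $\bar\partial f_N$ and recognize each term as $(z-\iu)^{j-1}f$ applied through Helffer--Sjöstrand, i.e. as a function of $H_b^\star$ with Schwartz symbol, so by induction on the number of resolvent factors the kernel bound $\langle x-x'\rangle^{-M}$ propagates. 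For the exponential/polynomial localization of the remainder term $K_7$, combine \eqref{hc21} with the $z_2$-integration: $\int_{-1}^1 |z_2|^{N-1}\e^{-r|z_2||x-x'|}\,\d z_2 \le C_N\langle x-x'\rangle^{-N}$ after the change of variables $t=|z_2||x-x'|$, which yields the stated decay $\langle x-x'\rangle^{-M}$ for $M=N-7$ and hence for arbitrary $M$ by taking $N$ large.

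For joint continuity of the kernel of $f(H_b^\star)$, I would argue that each term in the decomposition above is jointly continuous: the terms $(H_b^\star-\iu)^{-j}$ with $j\ge 3$ have jointly continuous kernels by the remarks following \eqref{hc22} and Lemma \ref{lemmahc3} (powers $\ge 3$ are continuous), the $j=2$ term has only a logarithmic singularity which is locally integrable and does not enter after the telescoping is organized so that only $j\ge 3$ and $K_7$ remain, and $K_7(x,y;z)$ is continuous in $(x,y)$ uniformly for $z$ in the support of $\bar\partial f_N$ (it is a composition of $L^2$-continuous kernels), so dominated convergence against the integrable $\bar\partial f_N(z)$ gives joint continuity of $f(H_b^\star)(x,x')$.

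For the trace-class statement, write $\chi_L f(H_b^\star) = \chi_L (H_b^\star-\iu)^{-4}\,(H_b^\star-\iu)^4 f(H_b^\star)$. The first factor $\chi_L(H_b^\star-\iu)^{-4}$ is trace class by Lemma \ref{lemmahc4} (the set $[0,1]\times[0,L]$ is bounded). The second factor $(H_b^\star-\iu)^4 f(H_b^\star)$ is bounded: indeed $(H_b^\star-\iu)^4 f(H_b^\star) = \tilde f(H_b^\star)$ where $\tilde f(t)=(t-\iu)^4 f(t)$ is again Schwartz, hence bounded by the spectral theorem. Therefore $\chi_L f(H_b^\star)$ is trace class as a product of a trace-class and a bounded operator. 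The trace formula $\Tr(\chi_L f(H_b^\star)) = \int_{\R^2}\chi_L(x) f(H_b^\star)(x,x)\,\d x$ then follows from the general fact that for a trace-class operator with jointly continuous kernel the trace equals the integral of the diagonal; concretely one writes $\Tr(\chi_L f(H_b^\star)) = \Tr\big((\chi_L (H_b^\star-\iu)^{-2})(\chi_L(H_b^\star-\iu)^{-2})^* \cdot (\text{bounded})\big)$ factorized through Hilbert--Schmidt operators with explicit continuous kernels and applies the standard kernel composition/trace identity, or one invokes the Brislawn-type theorem for trace-class operators with continuous kernels. I would also note, if one wants to avoid the Brislawn subtleties, that $\chi_L(H_b^\star-\iu)^{-2}$ is Hilbert--Schmidt with continuous kernel (by Lemma \ref{lemmahc3}, the $j=2$ power has at worst logarithmic, hence $L^2_{\mathrm{loc}}$, singularity, and the exponential localization gives $L^2$ globally against $\chi_L$), so $\chi_L f(H_b^\star)$ is a product of two Hilbert--Schmidt operators whose kernels multiply to a continuous integrable kernel, and the trace is the integral of that product on the diagonal, which collapses to $\int \chi_L f(H_b^\star)(x,x)\,\d x$.

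The main obstacle I anticipate is handling the $|x-x'|^{-1}$ local singularity of the resolvent kernel \emph{coherently} with the Helffer--Sjöstrand integral: one must not bound $|(H_b^\star-z)^{-1}(x,x')|$ pointwise and then integrate, because that would wrongly suggest $f(H_b^\star)$ has a diagonal singularity. The correct bookkeeping is the six-fold resolvent telescoping \eqref{hc22}, after which all genuinely kernel-singular terms are pushed into smooth (for $j\ge 3$) or controllably logarithmic ($j=2$) pieces, and the error term carries the explicit bound \eqref{hc21}; getting the constants and the choice of $N$ right is routine once this structural point is in place. Everything here is parallel to the Schrödinger treatment in \cite{MoscolariStottrup, CorneanMoscolariTeufel}, to which I would refer for the remaining technical details.
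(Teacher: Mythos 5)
Your proposal follows essentially the same route as the paper: the six-fold resolvent telescoping \eqref{hc22} fed into the Helffer--Sj\"ostrand formula, the bound \eqref{hc21} on $K_7$ for the off-diagonal decay and continuity, and the factorization $\chi_L f(H_b^\star)=\chi_L(H_b^\star-\iu)^{-4}\,\tilde f(H_b^\star)$ with Lemma \ref{lemmahc4} for the trace-class claim. One small simplification: \emph{all} six terms $(z-\iu)^{j-1}(H_b^\star-\iu)^{-j}$, $j=1,\dots,6$ (not only $j=1,2$), are entire in $z$ and therefore integrate to zero against $\overline{\partial}f_N$, so only the $K_7$ remainder survives and your ``scalar contribution'' hedge is unnecessary.
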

\begin{proof}
The fact that $f(H_b^\star)$ is locally trace class is a consequence of Lemma \ref{lemmahc4} together with the fact that the function $(t-i)^4f(t)$ is bounded. Then, the existence of jointly continuous integral kernels can  be shown by mimicking the proof of \cite[Lemma A.2]{CorneanMoscolariTeufel}. %seen using \eqref{hc22} in the formula of $f(H_b^\star)$, where the analytic part vanishes under the complex integral: 
%$$[f(H_b^\star)](x,x')=-\pi^{-1}\int_\mathcal{D} \Bar{\partial}f_N(z)(z-\iu)^{6} K_7(x,x';z) \d z_1 \d z_2.$$
Then, by \eqref{hc22} in the formula of $f(H_b^\star)$, where the analytic part vanishes under the complex integral, we get
\begin{equation}
\label{eq:aux1}
f(H_b^\star)(x,x')=-\pi^{-1}\int_\mathcal{D} \Bar{\partial}f_N(z)(z-\iu)^{6} K_7(x,x';z) \d z_1 \d z_2.
\end{equation}
If we multiply \eqref{eq:aux1} with $\langle x-x'\rangle^M$, we see that due to \eqref{hc21} we can bound the polynomial growth at the expense of a factor $|z_2|^{-M}$, but which can also be controlled by choosing an almost analytic extension with a large enough $N$. 
\end{proof}

\section{Proof of Theorem \ref{thm:DiracBulkEdge}}\label{sec5}

We follow the same ideas and steps as in \cite{CorneanMoscolariTeufel}, with the remark that in the Dirac case, the ``edge current operator"  $\iu [H_b^E,X_1]$ is of order zero and equals $\sigma_1$. 

\subsection{Geometric perturbation theory for Dirac operators}
Before proving (i) we need to show that the resolvent of the edge operator $H_b^E$ and the resolvent of the bulk operator $H_b$ are "close" to each other far away in the bulk. Let $L \leq 1$ be fixed and define
\begin{align*}
\Xi_L(t) \coloneqq \{x \in E \mid \dist(x,\partial E) \leq t \sqrt{L}\},
\end{align*}
where $t > 0$. Furthermore, let both $0 \leq \eta_0,\eta_L \leq 1$ be smooth and non-negative functions independent of $x_1$ satisfying that $\eta_0(x)+\eta_L(x)=1$, for every $x \in E$ along with the assumptions:
\begin{enumerate}
    \item $\supp (\eta_0) \subseteq \Xi_L(2) $,
    \item $\supp (\eta_L) \subseteq E \setminus \Xi_L(1)$,
    \item $\Vert \partial_{x_2}^n \eta_k \Vert_\infty \leq C_n L^{-n/2}$, for $n \in \mathbb{N}$, $k \in \{0,L\}$ and $C_n>0$.
\end{enumerate}
We now introduce two similar functions, which can be considered as a stretched out versions of $\eta_0$ and $\eta_L$. Let both $0 \leq \tilde{\eta}_0,\tilde{\eta}_L \leq 1$ be smooth and non-negative functions independent of $x_1$ satisfying:
\begin{enumerate}[label=(\Roman*)]
    \item $\supp (\tilde{\eta}_0) \subseteq \Xi_L(11/4) $,
    \item $\supp (\tilde{\eta}_L) \subseteq E \setminus \Xi_L(1/4)$,
    \item $\tilde{\eta}_k\eta_k = \eta_k$, for $k \in \{0,L\}$,
    \item $\dist(\supp(\partial_{x_2} \tilde{\eta}_k), \supp(\eta_k)) \sim C\sqrt{L}$, for $k \in \{0,L\}$ and $C > 0$,
    \item $\Vert \partial_{x_2}^n \tilde{\eta}_k \Vert_\infty \leq C_n L^{-n/2}$, for $n \in \mathbb{N}$, $k \in \{0,L\}$ and $C_n > 0$.
\end{enumerate}
Note that $\eta_k$ and the derivative of $\tilde{\eta}_k$ have disjoint supports.

\begin{proposition}\label{prop:HbedgeequalHbonHbdomain}
The multiplication by $\tilde{\eta}_L$ maps $ \mathscr{D}(H_b)$ into $\mathscr{D}(H_b^E)$ and
\begin{align*}
    H_b^E \tilde{\eta}_L \psi = H_b \tilde{\eta}_L \psi,
\end{align*}
for every $\psi \in \mathscr{D}(H_b)$.
\end{proposition}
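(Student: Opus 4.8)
The plan is to use a mollification argument together with the key structural fact that, for a Dirac operator, commuting with a smooth bounded multiplier produces an \emph{order-zero} (bounded multiplication) operator rather than a first-order one.

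First I would invoke that $C_0^\infty(\R^2)^2$ is a core for $H_b$: given $\psi\in\mathscr{D}(H_b)$, choose $\psi_n\in C_0^\infty(\R^2)^2$ with $\psi_n\to\psi$ and $H_b\psi_n\to H_b\psi$ in $L^2(\R^2,\C^2)$. Since $\tilde{\eta}_L$ depends only on $x_2$, is smooth, and has bounded derivatives (property (V) above), and since the magnetic term $b\tilde{A}\cdot\sigma$ is itself a multiplication operator, the commutator reduces to
\begin{align*}
[H_b,\tilde{\eta}_L]=[-\iu\nabla\cdot\sigma,\tilde{\eta}_L]=-\iu(\partial_{x_2}\tilde{\eta}_L)\,\sigma_2,
\end{align*}
which is a bounded multiplication operator. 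Hence $\tilde{\eta}_L\psi_n\to\tilde{\eta}_L\psi$ while
\begin{align*}
H_b(\tilde{\eta}_L\psi_n)=\tilde{\eta}_L H_b\psi_n+[H_b,\tilde{\eta}_L]\psi_n\longrightarrow \tilde{\eta}_L H_b\psi+[H_b,\tilde{\eta}_L]\psi
\end{align*}
in $L^2(\R^2,\C^2)$; as $H_b$ is closed this already gives $\tilde{\eta}_L\psi\in\mathscr{D}(H_b)$ with $H_b(\tilde{\eta}_L\psi)$ equal to the displayed limit.

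Next I would transfer this to the half-plane. Each $\tilde{\eta}_L\psi_n$ is $C_0^\infty$ and, because $\supp(\tilde{\eta}_L)\subseteq E\setminus\Xi_L(1/4)$, supported in $\{x_2\ge \tfrac14\sqrt{L}\}$; its restriction to $E$ therefore lies in $\mathscr{M}$ --- it is the restriction of a Schwartz function, it is compactly supported (so it satisfies the exponential bounds in \eqref{hc6}), and it vanishes near $\partial E$, so the infinite-mass boundary condition holds trivially. Thus $\tilde{\eta}_L\psi_n|_E\in\mathscr{D}(\tilde{H}_b)\subseteq\mathscr{D}(H_b^E)$ and, since $\tilde{H}_b$ acts by the same local differential expression $(-\iu\nabla-b\tilde{A})\cdot\sigma$, one has $H_b^E(\tilde{\eta}_L\psi_n|_E)=\big(H_b(\tilde{\eta}_L\psi_n)\big)|_E$. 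Because $\tilde{\eta}_L$ is supported in $\{x_2>0\}$, restriction to $E$ preserves the $L^2$ norm of these functions, so $\tilde{\eta}_L\psi_n|_E\to\tilde{\eta}_L\psi|_E$ and $H_b^E(\tilde{\eta}_L\psi_n|_E)\to\big(H_b(\tilde{\eta}_L\psi)\big)|_E$ in $L^2(E,\C^2)$. Closedness of $H_b^E=\overline{\tilde{H}_b}$ then yields $\tilde{\eta}_L\psi|_E\in\mathscr{D}(H_b^E)$ and $H_b^E(\tilde{\eta}_L\psi|_E)=\big(H_b(\tilde{\eta}_L\psi)\big)|_E$, which is the claim after identifying $\tilde{\eta}_L\psi$ (which lives in $E$) with its restriction.

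The argument is largely soft; the only points to be careful about are: (i) confirming that $[H_b,\tilde{\eta}_L]$ is genuinely bounded, i.e.\ exploiting that the Dirac operator is first order with constant principal symbol so that no unbounded term survives in the commutator (this is precisely the simplification over the Schr\"odinger case of \cite{CorneanMoscolariTeufel}); and (ii) making sure the two closedness arguments, on $\R^2$ for $H_b$ and on $E$ for $H_b^E$, are driven by the \emph{same} mollifying sequence, so that the $L^2(E)$-limits are the restrictions of the $L^2(\R^2)$-limits already identified in the first step.
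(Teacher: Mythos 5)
Your proposal is correct and follows essentially the same route as the paper: approximate $\psi$ by a $C_0^\infty(\R^2,\C^2)$ core sequence, observe that the commutator with $\tilde{\eta}_L$ is the bounded order-zero operator $-\iu(\nabla\tilde{\eta}_L)\cdot\sigma$, note that $\tilde{\eta}_L\psi_n$ is supported away from $\partial E$ and hence lies in $\mathscr{M}$, and conclude by closedness of $H_b^E=\overline{\tilde{H}_b|_{\mathscr{M}}}$. Your version is in fact slightly more explicit than the paper's on two points it leaves implicit, namely that $\tilde{\eta}_L\psi\in\mathscr{D}(H_b)$ (needed for the right-hand side of the identity to make sense) and that compactly supported smooth spinors vanishing near $x_2=0$ satisfy the membership conditions defining $\mathscr{M}$.
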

\begin{proof}
Let $\psi \in \mathscr{D}(H_b)$ and $(\psi_n) \subseteq C^\infty_0(\mathbb{R}^2,\mathbb{C}^2)$ be a sequence for which
\begin{align*}
    \psi_n \to \psi \qquad \textup{ and } \qquad \tilde{H}_b \psi_n \to H_b \psi
\end{align*}
in $L^2(\mathbb{R}^2,\mathbb{C}^2)$, when $n \to \infty$. Such a sequence exists since $H_b$ is essentially self-adjoint on  $C^\infty_0(\mathbb{R}^2,\mathbb{C}^2)$. By the properties of $\tilde{\eta}_L$ it follows that $(\tilde{\eta}_L\psi_n) \subset C^\infty_0(E,\mathbb{C}^2)$ and when $n \to \infty$ we have $\tilde{\eta}_L\psi_n \to  \tilde{\eta}_L\psi$ and:
\begin{align*}
\tilde{H}_b(\tilde{\eta}_L\psi_n) =(-\iu \nabla \tilde{\eta}_L)\cdot \sigma \psi_n +\tilde{\eta}_L\tilde{H}_b\psi_n\to (-\iu \nabla \tilde{\eta}_L)\cdot \sigma \psi +\tilde{\eta}_LH_b\psi=H_b \tilde{\eta}_L \psi,
\end{align*}
 which shows that $\tilde{H}_b(\tilde{\eta}_L\psi_n)$ converges to $H_b \tilde{\eta}_L \psi$, and hence by the definition of the closure
\begin{align*}
    H_b^E(\tilde{\eta}_L\psi) =\lim_{n\to\infty}\tilde{H}_b(\tilde{\eta}_L\psi_n)= \tilde{H}_b(\tilde{\eta}_L\psi).
\end{align*}
\end{proof}

Let $z\in \C$ with a non-zero imaginary part. We  define in $L^2(E,\mathbb{C}^2)$ the bounded operator
\begin{align}\label{hc16}
    U_L(z) \coloneqq \tilde{\eta}_L(H_b-z)^{-1}\eta_L + \tilde{\eta}_0(H^E_b - z)^{-1}\eta_0.
\end{align}
Then by the properties of $\eta_k$ and $\tilde{\eta_k}$, for $k \in \{0,L\}$ along with Proposition~\ref{prop:HbedgeequalHbonHbdomain} we have 
\begin{align*}
    (H_b^E-z)U_L(z) 
    &= 1 + W_L(z),
\end{align*}
 $W_L$ is the bounded operator given by
\begin{align*}
    W_L(z) \coloneqq -\iu (\nabla \tilde{\eta}_L) \cdot \sigma\; (H_b - z)^{-1}\eta_L  -\iu (\nabla \tilde{\eta}_0) \cdot \sigma\; (H_b^E - z)^{-1}\eta_0.
\end{align*}
This leads to the following identity satisfied by the resolvent of the edge operator
\begin{align*}%\label{hc15'}
    (H^E_b-z)^{-1} = U_L(z) -(H^E_b-z)^{-1} W_L(z).
\end{align*}
Using the Helffer-Sj\"ostrand formula this leads to: 
\begin{align}\label{hc17}
    f(H^E_b) = \tilde{\eta}_L f(H_b)\eta_L +\tilde{\eta}_0 f(H_b^E)\eta_0  +\pi^{-1}\int_{\mathcal{D}}\overline{\partial}f_N(z)(H^E_b-z)^{-1} W_L(z) \d z_1 \d z_2.
\end{align}

%\addtocontents{toc}{\protect\setcounter{tocdepth}{0}}

\subsection{Proof of Theorem \ref{thm:DiracBulkEdge} \ref{thm:DiracBulkEdge1}}

The proof of the first limit in \eqref{hc3'} is more or less identical with the Schr\"odinger case \cite{CorneanMoscolariTeufel} and it is based on \eqref{hc17}. Namely, we see that the trace of $L^{-1}\chi_L\tilde{\eta}_0 f(H_b^E)\eta_0 $ is bounded by $1/\sqrt{L}$, while the trace of $L^{-1}\chi_L \tilde{\eta}_L f(H_b)\eta_L$ converges to ${\rm Tr}(\chi_\Omega f(H_b))$ due to the fact that the diagonal of the integral kernel of $f(H_b)$ is $\Z^2$-periodic.  Now let us prove that when the last term in \eqref{hc17} is multiplied with $\chi_L$, its trace decays faster than any power of $L$. Indeed, by using a similar trick as in \eqref{hc22}, the terms with singular kernels are analytic hence do not contribute to the $z$ integral. Also, because the distance between the support of $\nabla\eta_0$ and the support of $\eta_0$ is of order $\sqrt{L}$ (and the same is true for the other cut-off), we obtain a decaying factor which goes like $\e^{-c|z_2|\sqrt{L}}$ for every $z\in \mathcal{D}$. This factor decays faster than any power of $L$ at the expense of an equally high power in $|z_2|^{-1}$, which can be controlled by the choice of $N$ in the construction of $f_N$. 

The proof of the second limit, namely $\lim_{L\to\infty}\rho_L'(b)=B_f'(b)$, is more involved and needs magnetic perturbation theory adapted to Dirac operators. Let us fix $z$ with a non-zero imaginary part and $b_0\geq 0$. For $\epsilon>0$ we define the operators  $S_\epsilon(z)$ and $T_\epsilon(z)$ through their integral kernels 
$$S_\epsilon(x,x';z):=\e^{i\epsilon\Mphase(x,x')} \K_{b_0}^E(x,x';z)\; {\rm and}\; T_\epsilon(x,x';z):=-\epsilon \e^{i\epsilon\Mphase(x,x')}\sigma \cdot{A}(x-x') \K_{b_0}^E(x,x';z).$$

Then we have 
$(H_{b_0+\epsilon}^E-z)S_\epsilon(z)=1+T_\epsilon(z)$  or by iterating:

\begin{equation}\label{hc20}
(H_{b_0+\epsilon}^E-z)^{-1}=
    S_\epsilon(z)-S_\epsilon(z)
    T_\epsilon(z)+(H_{b_0+\epsilon}^E-z)^{-1} T^2_\epsilon.
\end{equation}
Using the composition rule \eqref{hc13} we have that the integral kernel of $S_\epsilon T_\epsilon$ is given by:
\begin{align*}
(S_\epsilon T_\epsilon)(&x,x';z)=-\epsilon \e^{i\epsilon\Mphase(x,x')}\int_E \d y\; \K_{b_0}^E(x,y;z)\sigma \cdot{A}(y-x') \K_{b_0}^E(y,x';z)\\
&-\epsilon e^{i\epsilon\Mphase(x,x')}\int_E \d y\;\Big (e^{i\epsilon(y_1-x_1)(y_2-x_2')}-1\Big ) \K_{b_0}^E(x,y;z)\sigma \cdot{A}(y-x') \K_{b_0}^E(y,x';z).
\end{align*}
This integral kernel is already jointly continuous in $x$ and $x'$ and moreover, using the bound 
$$\big \vert \e^{i\epsilon(y_1-x_1)(y_2-x_2')}-1\big \vert \leq \epsilon \; |x-y|\; |y-x'|$$
together with Lemma \ref{lemmahc3}, there exist three constants $C,p,q>0$ such that 
\begin{align}\label{hc23}
\Big |(S_\epsilon T_\epsilon)(&x,x;z)+\epsilon \int_E \d y\; \K_{b_0}^E(x,y;z)\sigma \cdot{A}(y-x) \K_{b_0}^E(y,x;z)\Big |\leq C(1+|z_1|)^p|z_2|^{-q} \epsilon^2.
\end{align}
The last term on the right-hand side of  \eqref{hc20} has an integral kernel which can be bounded by  a similar expression as the one in the right-hand side of \eqref{hc23}, since $T_\epsilon^2$ is already of second order in $\epsilon$. 

Moreover, by using the Helffer-Sj\"ostrand formula in \eqref{hc20} we obtain 
\begin{align}\label{hc24}
f(H_{b_0+\epsilon}^E)(x,x)&=f(H_{b_0}^E)(x,x)\\
&-\epsilon \pi^{-1} \int_\mathcal{D} \Bar{\partial}f_N(z)\int_E \d y\; \K_{b_0}^E(x,y;z)\sigma \cdot{A}(y-x) \K_{b_0}^E(y,x;z)\d z_1 \d z_2 +\mathcal{O}(\epsilon^2),\nonumber
\end{align}
uniformly in $x$. In particular, using the definition of $\rho_L(b)$ from \eqref{hc3}, the estimate \ref{hc24} leads to
\begin{equation}\label{hc25}
\rho_L'(b_0)=-\pi^{-1} L^{-1}\int_{S_L}\d x\int_E \d y\int_\mathcal{D}\d z_1 \d z_2\; \Bar{\partial}f_N(z)\;  \K_{b_0}^E(x,y;z)\sigma \cdot{A}(y-x) \K_{b_0}^E(y,x;z).
\end{equation}
Also, \eqref{hc24} leads to 
$$\rho_L'(b_0)=\frac{\rho_L(b_0+\epsilon)-\rho_L(b_0)}{\epsilon}+\mathcal{O}(\epsilon)$$
where a crucial remark is that the remainder is uniformly bounded in $L$. Since $\rho_L$ itself converges to $B_f$,  this implies that $\limsup\rho_L'(b_0)$ and $\liminf\rho_L'(b_0)$ can at most differ by something of order $\epsilon$, thus they must be equal to each other. Now by taking the limit we have 
$$\lim_{L\to\infty} \rho_L'(b_0)=\frac{B_f(b_0+\epsilon)-B_f(b_0)}{\epsilon}+\mathcal{O}(\epsilon).$$
Thus $B_f'(b_0)$ exists and $\lim_{L\to\infty} \rho_L'(b_0)=B_f'(b_0)$. In any case, one can also prove the existence of $B_f'(b_0)$ directly from the definition of $B_f$. 

\subsection{Proof of Theorem \ref{thm:DiracBulkEdge}(ii)}%\ref{thm:DiracBulkEdge2}}

The remaining task is to take the limit $L\to\infty$ in \eqref{hc25} in a clever way. Again the proof is similar to the Schr\"odinger case \cite{CorneanMoscolariTeufel} and here we summarize the main steps: 
\begin{enumerate}
 \item[1.] We have $\sigma\cdot {A}(y-x)=(x_2-y_2)\sigma_1$. We also have that $\iu [H_b^E,X_1]=\sigma_1$. Hence the part of the integral in \eqref{hc25} which only contains $-x_2\sigma_1$ is proportional up to a constant with 
$${\rm Tr}(\chi_L [f(H_b^E),X_1]X_2)={\rm Tr}( [\chi_L X_2 f(H_b^E),X_1])=0$$
as the trace of a total commutator. Hence only the term $y_2\sigma_1$ counts, but for later reasons we write $y_2-L$ instead of $L$; this does not change the value of the integral because the added term is again proportional with the trace of a  commutator. Therefore we get:

\begin{equation}\label{hc26}
\begin{aligned}
\rho_L'(b_0)=-\frac{1}{\pi L}\int_{S_L}\d x\int_E \d y\int_\mathcal{D}\d z_1 \d z_2\; \Bar{\partial}f_N(z)\;  \K_{b_0}^E(x,y;z)(L-y_2)\sigma_1 \K_{b_0}^E(y,x;z).
\end{aligned}
\end{equation}

\item[2.] By restricting the integral in $y$ to the strip $0<y_2<L$, the error we make is of order $\mathcal{O}(L^{-1})$; this is because when $y_2>L$ we have that $\e^{-r|z_2||y_2-x_2|}=\e^{-r|z_2|(y_2-L)}\e^{-r|z_2|(L-x_2)}$, hence we have a strong localization of the integrand near $x_2=L$ while the integral over $y_2>L$ is convergent. Thus we have:

\begin{align}\label{hc27}
&\lim_{L\to\infty}\rho_L'(b_0)\\
&=-\lim_{L\to\infty}\frac{1}{\pi L}\int_{S_L}\d x\int_{0<y_2<L} \d y\int_\mathcal{D}\d z_1 \d z_2\; \Bar{\partial}f_N(z)\;  \K_{b_0}^E(x,y;z)(L-y_2)\sigma_1 \K_{b_0}^E(y,x;z).\nonumber
\end{align}

\item[3.]  By extending the integral over $x$ to the infinite strip $S_\infty$, the error we make is again of order $L^{-1}$; this is so because if $x_2>L$ and $y_2<L$ we may again write $\e^{-r|z_2||y_2-x_2|}=\e^{-r|z_2|(L-y_2)}\e^{-r|z_2|(x_2-L)}$ and both spatial integrals are bounded in $L$. Hence we obtain:
\begin{align}\label{hc28}
&\lim_{L\to\infty}\rho_L'(b_0)\\
&=-\lim_{L\to\infty}\frac{1}{\pi L}\int_{[0,1]\times \R}\d x\int_{0<y_2<L} \d y\int_\mathcal{D}\d z_1 \d z_2\; \Bar{\partial}f_N(z)\;  \K_{b_0}^E(x,y;z)(L-y_2)\sigma_1 \K_{b_0}^E(y,x;z).\nonumber
\end{align}

\item[4.] At this stage we use the covariance property of the integral kernels, which is a consequence of the fact that $H_b^E$ is $\Z$-periodic in the $x_1$ variable and commutes with translations parallel to the axis defined by $x_2=0$. This implies that 
$$\K_{b_0}^E(x_1,x_2;y_1+\gamma_1,y_2;z)=\K_{b_0}^E(x_1-\gamma_1,x_2;y_1,y_2;z),\quad \forall \gamma_1\in \Z.$$
By writing 
$$\int_{0<y_2<L} F(y)\d y=\sum_{\gamma_1\in \Z}\int_0^1 \d y_1\int_0^L\d y_2 F(y_1+\gamma_1,y_2)$$
we see that the spatial integrals in \eqref{hc28} have the property: 
$$\int_0^1 \d x_1 \int_0^\infty \d x_2 \int_{\R}\d y_1\int_{0}^L \d y_2=\int_E \d x \int_0^1\d y_1\int_{0}^L \d y_2=\int_E \d x \int_{S_L}\d y.$$
The integral with respect to $x$ over the whole $E$ generates a resolvent raised to power $2$. Furthermore, the integral over $z$ generates the operator $-f'(H_{b_0}^E)$, since $$f'(t)=\pi^{-1}\int_\mathcal{D}\d z_1 \d z_2\; \Bar{\partial}f_N(z)(t-z)^{-2}.$$ Thus:
\begin{align}\label{hc29}
&B_f'(b_0)=\lim_{L\to\infty}\rho_L'(b_0)=-\lim_{L\to\infty}\int_0^1\d y_1\int_0^L\d y_2   (1-y_2/L)\sigma_1 f'(H_{b_0}^E)(y,y),
\end{align}
which proves Theorem \ref{thm:DiracBulkEdge}(ii) in the particular case when $g(t)=1-t$ if $0\leq t\leq 1$. The extension to all smooth $g$ with $g(0)=1$ and $g(1)=0$ can be done by mimicking the method in \cite{CorneanMoscolariTeufel}. 
\end{enumerate}

\subsection{Proof of Theorem \ref{thm:DiracBulkEdge}(iii)}

Now let us sketch an argument for why \eqref{hc2} holds. When $f'$ is supported in a spectral gap of the bulk operator $H_b$, then $f'(H_b)=0$. By using \eqref{hc17} with $f$ replaced by $f'$ and with $L=1$, we observe that the two non-zero terms contain factors which are compactly supported in the $x_2$ variable. Since $\chi_\infty$ localizes in $0\leq x_1\leq 1$, it turns out that the operator $\chi_\infty f'(H_b^E) $ is trace class. This is proved by showing that it can be written as a finite sum of products involving two Hilbert-Schmidt operators. Therefore the function $\chi_\infty(y){\rm tr}_{\C^2}\big (\sigma_1 f'(H_b^E)(y,y)\big )$ is in $L^1(E)$ and its integral equals the trace of $\chi_\infty \sigma_1 f'(H_b^E)$. Eventually \eqref{hc2} follows from \eqref{hc29} after an application of the Lebesgue dominated convergence theorem. 

\subsection{Proof of Theorem \ref{thm:DiracBulkEdge} \ref{thm:DiracBulkEdge4}}
Following \cite{BS1}, the purely magnetic Dirac-Landau edge operator commutes with all the translations in the $x_1$ direction and can be written as a fiber integral of an one-dimensional Dirac operator $h_b^E(\xi_1)=-\iu \frac{d}{dx_2}\sigma_2+(bx_2+\xi_1)\sigma_1$ living in $L^2([0,\infty))\otimes \C^2$, with the boundary condition $\sigma_1\Psi(0)=\Psi(0)$. The fiber operator has compact resolvent and purely discrete spectrum. There exist some positive non-degenerate fiber eigenvalues $\lambda_k(\xi_1)$, $k\geq 0$, $\xi_1\in \R$, which are strictly increasing, and  $$\lim_{\xi_1\to-\infty}\lambda_k(\xi_1)=\sqrt{2kb},\quad  \lim_{\xi_1\to\infty}\lambda_k(\xi_1)=\infty,\quad k\geq 0.$$ 
Moreover, there exist some negative non-degenerate fiber eigenvalues $\lambda_k(\xi_1)$, $k\leq -1$, $\xi_1\in \R$, which satisfy $$\lim_{\xi_1\to-\infty}\lambda_k(\xi_1)=-\sqrt{2|k|b},\quad \lim_{\xi_1\to\infty}\lambda_k(\xi_1)=-\infty,\quad k\leq -1.$$ 
The negative eigenvalues have a negative global maximum. In particular, the purely magnetic edge operator has a spectral gap  $(-\sqrt{b}\; \theta,0)$ where $0<\theta<\sqrt{2}$. More about this can be found in Theorem 4.3 in \cite{BS1}. 

 Now let us assume that $f'$ has a support which does not include any of the bulk Dirac-Landau levels $\pm \sqrt{2nb}$, $n\geq 0$. Thus \eqref{hc2} holds true. Remember also that  $\sigma_1= \iu [H_b^E,X_1]$. Using the identity 
 $$f'(H_b^E)(x,y)=\frac{1}{2\pi}\int_\R\d \xi_1 \; \e^{\iu \xi_1(x_1-y_1)}f'(h_b^E(\xi_1))(x_2,y_2),$$
 we have 
 \begin{align*}
 -{\rm Tr}\big (\chi_\infty f'(H_b^E)\sigma_1\big )&=-\frac{1}{2\pi}\int_\R\d \xi_1 \; {\rm Tr}\big (f'(h_b^E(\xi_1))\partial_{\xi_1}h_b^E(\xi_1)\;  \big )\\
 &=\frac{1}{2\pi}\sum_{k\in\Z }\big (f(\lambda_k(-\infty))-f(\lambda_k(+\infty))\big )=\frac{1}{2\pi}\sum_{k\in\Z }f\big ({\rm sgn}(k)\sqrt{2|k|b}\big ),
 \end{align*}
where in the second equality we used Feynman-Hellmann and the fact that $f'$ is not supported near the bulk energy (i.e. $f$ is constant in a neighborhood of every such energy level). Now if we take an $f$ which has compact support and equals $1$ near $N$ different bulk levels, the corresponding $B_f(b)$ will be the integrated density of states of the bulk projection corresponding to these $N$ levels.  Then \eqref{hc2} coupled with \eqref{hc4} implies that the Chern character must be equal to $N$. 

Finally, let us give a few hints for the proof of \eqref{dc2}. This formula can be obtained by computing the integral kernel of the purely magnetic Dirac-Landau resolvent, followed by the one of $f(H_b)$. The main idea is to use the identity
$$\big ((p-b{A})\cdot \sigma -z\big )^{-1}=\big ((p-b{A})\cdot \sigma +z\big )\big ([(p-b{A})^2 -z^2]I_2 +b\sigma_3\big )^{-1}, $$
to identify the diagonal operator-valued entries (the only ones contributing to the local trace): 
$$z\big ((p-b{A})^2 -z^2 \pm b\big )^{-1}=\sum_{n\geq 0}\frac{z}{(2n+1)b\pm b-z^2} \, L_n $$
where $L_n$ are the spectral projections onto the eigenspace associated to the $n$-th eigenvalue of the Landau operator in  $L^2(\R^2)$, followed by an application of the Helffer-Sj\"ostrand formula \eqref{dc1} (or plain residue calculus if $f$ is analytic near the real axis), and finally using that the diagonal value of the integral kernel of each $L_n$ equals $\frac{b}{2\pi}$.

\addtocontents{toc}{\protect\setcounter{tocdepth}{4}}

\section{Proof of Theorem \ref{thm:DiracGapLabel}}\label{sec6}

In this proof it is convenient to consider the transverse gauge $A_t(x):=2^{-1}(-x_2,x_1)$. Let $\Phi(x)=-x_1x_2/2$. Then 
$$\big (-\iu \nabla_x -b{A}_t(x)\big )\cdot\sigma =\e^{ib\Phi(x)}\big (-\iu \nabla_x -b{A}(x)\big )\cdot\sigma\; \e^{-\iu b\Phi(x)}.$$
This shows that quantities like $B_f(b)$ are gauge invariant. From now on, $H_b$ will denote the operator with the transverse gauge.  

As we have already explained in Remark \ref{remarkhc2}, the projection $\Pi_b$ can be written as $f(H_b)$ for some cleverly chosen smooth $f$. Using Lemma \ref{lemmahc5} one can prove that $\Pi_b$ has a jointly continuous integral kernel. However, this can also be done directly: First, write the formula \eqref{hc22} with $H_b$ instead of $H_b^E$ and with $z$ belonging to the contour $\mathcal{C}$. Second, we have
\begin{align*}
    \Pi_b=\frac{\iu}{2\pi} \oint_\mathcal{C} (z-\iu)^6(H_b-\iu)^{-3}(H_b-z)^{-1}(H_b-\iu)^{-3}\d z.
\end{align*}
Reasoning as in the proof of \eqref{hc21}, we see that now we can obtain a similar estimate but with $|z_2|$ replaced by the distance between $\mathcal{C}$ and the spectrum om $H_b$. We conclude that given some $b_0$ there exists some $C,\alpha>0$ such that 
\begin{align}\label{hc30}
    \Vert\Pi_{b_0}(x,x')\Vert_{\C^2}\leq C \; \e^{-\alpha |x-x'|},\quad \forall x,x'\in \R^2.
\end{align}
Now define the anti-symmetric phase 
$\phi_{\ast}(x,x'):=(x_2x'_1-x_1x_2')/2$. This phase has the property that 
$$\big (-\iu\nabla_x-bA_t(x)\big )\cdot \sigma \e^{\iu b\phi_{\ast}(x,x')}=\e^{\iu b\phi_{\ast}(x,x')}\big (-\iu \nabla_x-bA_t(x-x')\big )\cdot \sigma.$$

For $\epsilon\in \R$ and $z\in\mathcal{C}$ we define the bounded operators  $S_\epsilon(z)$ and $T_\epsilon(z)$ through their integral kernels 
$$S_\epsilon(x,x';z):=\e^{\iu \epsilon\phi_{\ast}(x,x')} \K_{b_0}(x,x';z)\; {\rm and}\; T_\epsilon(x,x';z):=-\epsilon \e^{\iu \epsilon\phi_{\ast}(x,x')}\sigma \cdot{A}_t(x-x') \K_{b_0}(x,x';z).$$
Then, we have 
$(H_{b_0+\epsilon}-z)S_\epsilon(z)=1+T_\epsilon(z)$, which shows in particular that if $|\epsilon|$ is small enough, then $z$ is in the resolvent set of $H_{b_0+\epsilon}$  and we have
$$(H_{b_0+\epsilon}-z)^{-1}=
    S_\epsilon(z)\big (1+T_\epsilon(z)\big)^{-1},$$
which implies
\begin{equation}\label{hc31}
(H_{b_0+\epsilon}-z)^{-1}=
    S_\epsilon(z)-(H_{b_0+\epsilon}-z)^{-1} T_\epsilon(z).
\end{equation}
This proves that there exists some $C,\alpha,\epsilon_0>0$ such that for every $| \epsilon|\leq \epsilon_0$ we have
\begin{equation}
\label{eq:aux2}
\big \Vert\Pi_{b_0+\epsilon}(x,x')-\e^{i\epsilon\phi_{\ast}(x,x')} \Pi_{b_0}(x,x')\big \Vert_{\C^2}\leq C\; |\epsilon| \; \e^{-\alpha |x-x'|},\quad \forall \, x,x'\in \R^2.
\end{equation}
The estimate \eqref{eq:aux2} together with \eqref{hc30} are the fundamental ingredients needed in \cite{CorneanMonacoMoscolari} in order to make their method to work. The rest of the proof of Theorem \ref{thm:DiracGapLabel} follows the same steps as in the Schr\"odinger case. 
\qed

\subsection*{Acknowledgments}
HC acknowledges partial support from Danmarks Frie Forskningsfond grant 8021-00084B. The work of MM is supported by a fellowship of the Alexander von Humboldt Foundation. KS is especially grateful to S. Teufel for extending an invitation to visit the University of Tübingen during the spring semester of 2022 and to ``Augustinus Fonden'', ``Knud Højgaards Fond'' and ``William Demant Fonden'' for financially supporting the visit.

\vfill

\end{document}